\newtheorem{Theorem}{Theorem}[section]
\newtheorem{Lemma}{Lemma}[section]
\newtheorem{Remark}{Remark}[section]
\newtheorem{Definition}{Definition}[section]
\newtheorem{Proposition}{Proposition}[section]
\newtheorem{Example}{Example}[section]
\makeatletter \@addtoreset{equation}{section} \makeatother
\begin{document}
\title{Construction of non-generalized Reed-solomon MDS codes based on systematic generator matrix}

\author{Shengwei Liu,~Hongwei Liu,~Bocong Chen\textsuperscript{*}}
\date{}
\maketitle
\insert\footins{\small
\noindent Shengwei Liu is with the School of Mathematics and Information Science, Guangxi University, Nanning 530004, China (e-mail: shwliu@gxu.edu.cn).\\
Hongwei Liu is with the School of Mathematics and Statistics and the Key Laboratory NAA-MOE, Central China Normal University, Wuhan 430079,
China (e-mail: hwliu@ccnu.edu.cn).\\
 Bocong Chen is with the School of Mathematics, South China University of Technology, Guangzhou 511442, China (e-mail:
bocongchen@foxmail.com).\\

 }
{\centering\section*{Abstract}}
 \addcontentsline{toc}{section}{\protect Abstract} 
 \setcounter{equation}{0} 
 Maximum distance separable (MDS) codes are considered optimal
 because the minimum distance cannot be improved for a
 given length and code size.
 The most prominent MDS codes are likely the generalized Reed-Solomon (GRS) codes. In 1989, Roth and Lempel constructed a type of MDS code that is not a GRS code (referred to as  non-GRS). In 2017, Beelen et al. introduced  twisted Reed-Solomon (TRS) codes
 and demonstrated that many MDS TRS codes are indeed non-GRS. Following this, the definition of TRS codes was generalized to the most comprehensive form, which we refer to as  generalized twisted Reed-Solomon (GTRS) codes.
In this paper, we prove that two families of GTRS codes are
non-GRS and provide a systematic generator matrix
for a class of GTRS codes. Inspired by the form of
the systematic generator matrix for GTRS codes,
we also present a construction of non-GRS MDS codes.

\medskip
\noindent{\large\bf Keywords:~}\medskip MDS codes, non-GRS MDS codes, Schur square, Cauchy matrix, GTRS codes

\noindent{\bf2010 Mathematics Subject Classification}: 94B05, 94B65.

\section{Introduction}
A linear code $C$ of length $n$, dimension $k$
and minimum Hamming distance $d$ over
the finite field $\mathbb{F}_{q}$ is
referred to as an $[\,n, k, d\,]$ code.
If the parameters of the code $C$ satisfy
 the Singleton bound $d=n-k+1$, then $ C$
 is classified as an MDS code.
The most prominent MDS codes probably are
the GRS codes \cite{R. Roth}.

Determining whether a given infinite family
of MDS codes is GRS based on the definition is
often not straightforward.
For an $[n,k]$ linear code $C$,
a generator matrix  of the form
$(I_{k}|A_{k\times(n-k)})$ is known as its
{\it systematic generator matrix}.
In \cite{G. Seroussi},
the authors proved that an MDS code with a
generator matrix of the form $(I|A)$ is GRS
if and only if $A$ is a {\it Cauchy matrix}.
For convenience, we refer to it as a Cauchy matrix,
as the authors also discussed extended Cauchy matrices.
In \cite{A. Lempel1}, the authors generalized
the form of the matrix $A$ from
 \cite{G. Seroussi} (see  \cite[Lemma 1]{A. Lempel1}).
 Those results from \cite{G. Seroussi} and \cite{A. Lempel1}
 provide a powerful method (referred to as the Cauchy matrix method) for
 determining whether an MDS code is GRS. Furthermore, in \cite{A. Lempel}, the authors constructed a non-GRS MDS code by extending
 a column of the generator matrix of a GRS code
 and proved its non-GRS property using the Cauchy matrix method.

Alternatively, the {\it Schur product method}
is another effective approach for constructing non-GRS codes.
Given vectors $\bm x=(x_{1},\dots,x_{n})
$
and
$\bm y=(y_{1},\dots,y_{n})$ of $\mathbb{F}_{q}^{n}$,
the Schur product of $\bm x$ and $\bm y$ is defined as $\bm x*\bm y=(x_{1}y_{1},\dots,x_{n}y_{n})$. For two linear codes $C,~D\subseteq\mathbb{F}_{q}^{n}$, the {\it product code}
$C*D$ is a linear subspace of $\mathbb{F}_{q}^{n}$
generated by all products $\bm x*\bm y$ where $\bm x\in C,~\bm y\in D$.
If $C=D$, we denote $C*C=C^{2}$ as the {\it square} of $C$.
For the basic results about product of linear codes,
we refer to \cite{H. Randriambololona}.
In \cite{D. Mirandola}, Mirandola and Z\'{e}mor characterized
the codes $C$ and $D$ whose product has the
maximum possible minimum distance,
 showing
 that an $[\,n, k, d\,]$ MDS code with
 $k\leq\frac{n-1}{2}$ is GRS if and only if $\dim(\mathcal{C}^{2})=2k-1$.
 This result also provides a robust method
 (referred to as the Schur product method)
 for determining whether an MDS code is GRS.
 In works \cite{C.ZHU,P. Beelen1,P. Beelen},
 the authors proved that GTRS codes contain many
 non-GRS MDS codes using the Schur product method.
 In \cite{S.LIU}, sufficient and necessary conditions
 for constacyclic codes to be GRS were provided
 through the Schur product method.

In 2017, twisted Reed-Solomon (TRS) codes were
initially introduced by Beelen et al. in \cite{P. Beelen}.
Subsequently, the definition of TRS codes was
generalized to its most comprehensive form,
referred to as generalized twisted
Reed-Solomon (GTRS) codes in this paper
(see \cite{WEIDONG,DING}).
Numerous studies have focused on self-dual
GTRS codes, MDS and near-MDS GTRS codes,
parity-check matrices of GTRS codes, etc.
(see \cite{WEIDONG}-\cite{GU}, \cite{H.LIU},
\cite{YQ}-\cite{C.ZHU1}). A significant
 motivation to study GTRS codes is that many MDS
 GTRS codes are non-GRS (as referenced in
  \cite{P. Beelen1} and \cite{C.ZHU}).
  The Schur product method is the primary
  tool used to prove that many MDS GTRS codes
  are non-GRS. The construction of non-GRS MDS
  codes has garnered interest in recent years.
  Recently, in \cite{chenhao}, the author constructed
  non-GRS MDS codes from arbitrary genus algebraic curves,
  while the authors of \cite{LIF} developed a type of
  non-GRS MDS code based on cyclic codes.
  The non-GRS property of GTRS codes has been
  established for many parameters
  but not all in \cite{C.ZHU,P. Beelen1,P. Beelen}.

 In this paper, we prove that two families of
 GTRS codes are non-GRS.
 Referring to the results of \cite{G. Seroussi}
 and \cite{A. Lempel1}, we recognize that the systematic generator
 matrix $(I|A)$ of a GRS code has the property that $A$ is a Cauchy matrix. Considering that many MDS GTRS codes are non-GRS, it is natural to investigate the differences between the systematic generator matrices of GTRS and GRS codes. In this paper, we provide the systematic generator matrix for a class of GTRS codes and also propose a construction of non-GRS MDS codes based on this systematic generator matrix.

This paper is organized as follows.  Section 2
presents the necessary definitions, notions, and results.
In Section 3, we investigate the non-GRS property of two
special families of GTRS codes. Section 4 provides
the systematic generator matrix of a class of GTRS codes,
and in Section 5, we present our construction of non-GRS MDS codes.
Finally, Section 6 concludes our work.

\section{Preliminaries}
Let $\mathbb{F}_{q}$ be a finite field with $q$
 elements, where $q$
  is a power of a prime number.
  Denote the multiplicative group of $\mathbb{F}_{q}$ by $\mathbb{F}_{q}^{*}=\mathbb{F}_{q}\backslash{\{0\}}$,
  and let $\mathbb{F}_{q}[x]$ represent the polynomial
  ring over $\mathbb{F}_{q}$.
  We will summarize an important property of
  the systematic generator matrix of a linear MDS code.
\begin{Proposition}\label{mdspro}(\cite[Page 321]{R. Roth})
An $[n,k,d]$ code with the systematic generator matrix $(I_{k}|A_{k\times(n-k)})$
is MDS if and only if every square submatrix
(formed by any $i$ rows and any $i$ columns
for any $i=1,2,\dots, \min\{k,n-k\}$) of $A_{k\times(n-k)}$ is invertible.
\end{Proposition}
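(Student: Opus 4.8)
The plan is to prove this characterization of MDS codes via the systematic generator matrix by connecting the non-vanishing of minors of $A$ to the Singleton-bound condition $d = n-k+1$. Recall that a code is MDS if and only if its minimum distance equals $n-k+1$, which by the standard duality of MDS parameters is equivalent to requiring that every set of $k$ columns of the generator matrix $G = (I_k \mid A)$ be linearly independent; in other words, every $k \times k$ submatrix of $G$ must be invertible. So the whole argument reduces to showing that \emph{every $k\times k$ submatrix of $(I_k\mid A)$ is invertible if and only if every square submatrix of $A$ is invertible.}

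First I would fix a $k \times k$ submatrix of $G$ obtained by selecting some subset $S$ of the $n$ columns with $|S| = k$. Writing $S = S_1 \sqcup S_2$, where $S_1$ indexes the chosen columns among the first $k$ (the identity block) and $S_2$ indexes the chosen columns among the last $n-k$ (the $A$ block), I would perform a Laplace-type expansion or a row/column rearrangement of this $k\times k$ matrix. The presence of the identity columns indexed by $S_1$ lets us eliminate the corresponding rows: expanding along those identity columns, the determinant of the full $k\times k$ submatrix equals (up to sign) the determinant of the complementary submatrix of $A$, namely the submatrix of $A$ formed by the rows \emph{not} indexed by $S_1$ and the columns indexed by $S_2$. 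Setting $i = |S_2| = k - |S_1|$, this complementary block is an $i \times i$ square submatrix of $A$, and as $S$ ranges over all $k$-subsets of the $n$ columns, the parameter $i$ ranges exactly over $1,2,\dots,\min\{k,n-k\}$ (the bound $n-k$ arising because $A$ has only $n-k$ columns, and $k$ because it has only $k$ rows). The key combinatorial point is to verify that this correspondence between $k$-column selections of $G$ and square submatrices of $A$ is surjective onto all sizes $i$ in the stated range.

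With the determinant identity in hand, the equivalence follows directly: the $k\times k$ submatrix of $G$ indexed by $S$ is invertible precisely when the associated $i\times i$ submatrix of $A$ is invertible, so every $k$-subset gives an invertible submatrix of $G$ if and only if every square submatrix of $A$ (of every admissible size) is invertible. Combined with the opening reduction identifying the MDS property with invertibility of all $k\times k$ submatrices of $G$, this completes the proof.

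I expect the main obstacle to be the careful bookkeeping in the expansion step: one must track precisely which rows of $A$ survive after removing the rows absorbed by the identity columns in $S_1$, confirm that the resulting block is genuinely the square submatrix of $A$ on the complementary row set, and check that every size $i$ between $1$ and $\min\{k,n-k\}$ is actually realized (and no larger size is forced). The sign arising from the Laplace expansion is irrelevant for the invertibility question, so I would not dwell on it; the substantive content is the bijective correspondence between column selections and square submatrices together with the determinant identity, rather than any hard computation.
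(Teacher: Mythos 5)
Your proposal is correct and follows the standard textbook argument: the paper itself does not prove this proposition (it cites it from MacWilliams--Sloane), and your two-step reduction --- first, MDS is equivalent to every $k$ columns of $(I_k\mid A)$ being linearly independent; second, Laplace expansion along the chosen identity columns identifies each $k\times k$ minor of $(I_k\mid A)$ with, up to sign, the complementary square minor of $A$, with the correspondence hitting every square submatrix of $A$ of every size $i=1,\dots,\min\{k,n-k\}$ --- is exactly the classical proof of this result.
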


Below is the formal definition of generalized Reed-Solomon codes:
\begin{Definition}\label{def-1}(\cite[Page 323]{R. Roth})
Let $\alpha_{1},\dots,\alpha_{n}\in\mathbb{F}_{q}\bigcup\{\infty\}$ be
distinct elements with $k<n$  and $v_{1},\dots,v_{n}\in\mathbb{F}_{q}^{*}$.
The corresponding generalized Reed-Solomon (GRS) code is defined by
$$
GRS_{n,k}(\bm\alpha,\bm v)= \{(v_{1}f(\alpha_{1}),\dots,v_{n}f(\alpha_{n}))\,|\, f\in \mathbb{F}_{q}[x], \deg f<k\}.
$$
In this setting, for a polynomial $f(x)$ of degree $\deg f(x)<k$, the quantity $f(\infty)$ is defined as the coefficient of $x^{k-1}$ in the polynomial $f(x)$. In the case $\bm v=\bm 1=(1,1,\dots,1)$, the code is called a Reed-Solomon (RS) code.
\end{Definition}

It is well-known that
GRS codes are MDS \cite{R. Roth}.
In \cite{G. Seroussi,A. Lempel1}, the authors studied the systematic generator matrix of GRS codes using Cauchy matrices. An $m\times r$ matrix $A=(a_{ij})$ is called a Cauchy matrix if $a_{ij}=c_{i}d_{j}/(x_{i}+y_{j})$ for some elements $c_{1},\dots,c_{m},d_{1},\dots,d_{r},x_{1},\dots,x_{m},y_{1},\dots,y_{r}$ in $\mathbb{F}_{q}$ with $c_{i}d_{j}\neq 0$, $x_{i}$ distinct, $y_{i}$ distinct and $x_{i}+y_{j}\neq 0$ for all $i,j$. If $m=r$, then the determinant of $A$ is given by (see \cite[Page 323]{R. Roth})

\begin{equation}\label{eq2.1}
\det(A)=\prod_{1\leq i\leq m}c_{i}\prod_{1\leq j\leq m}d_{j}\frac{\prod_{1\leq i<j \leq m}(x_{j}-x_{i})(y_{j}-y_{i})}{\prod_{1\leq i,j\leq m}(x_{i}+y_{j})}.
\end{equation}

In \cite{G. Seroussi},
the authors introduced the concept of an
{\it extended Cauchy matrix}.
An extended Cauchy matrix
$\overline{A}$ contains a row (or a column) of the form
 $c_{\infty}(d_{1} \dots d_{r})$ (or $d_{\infty}(c_{1} \dots c_{m})$),
 and removing this row (or column) transforms
 $\overline{A}$ into the Cauchy matrix  $A=(a_{ij})$
 mentioned above.
They also provided a formula for the determinant of
an extended Cauchy matrix:
\begin{equation}\label{eq2.2}
\det(\overline{A})=(-1)^{m-h}\prod_{1\leq i\leq m}c_{i}\prod_{1\leq j\leq m}d_{j}\frac{\prod_{1\leq i<j \leq m-1}(x_{j}-x_{i})\prod_{1\leq i<j \leq m}(y_{j}-y_{i})}{\prod_{1\leq j\leq m,1\leq i \leq m-1}(x_{i}+y_{j})}
\end{equation}
where $h$ is the index of the extended row.


It was also established in \cite{G. Seroussi} that the matrix
 $(I_{k}|A_{n-k})$ generates a GRS code $GRS_{n,k}(\bm\alpha,\bm v)$ with no $\alpha_{i}=\infty$  if and only if $A_{n-k}$ is a Cauchy matrix.
 Furthermore,  $(I_{k}|A_{n-k})$ generates a GRS code $GRS_{n,k}(\bm\alpha,\bm v)$ with one of the $\alpha_{i}=\infty$  if and only if $A_{n-k}$ is a extended Cauchy matrix.

From the proof of \cite[Lemma 1]{A. Lempel1},
we can restate it as follows:
\begin{Lemma}\label{rothcauchy}
Given an $m\times r$ Cauchy or extended Cauchy matrix $A=(a_{ij})$ over $\mathbb{F}_{q}$, we can always assume $a_{1j}=d_{j}$ and $a_{2j}=d_{j}y_{j}^{-1}$ for $1\leq j\leq r$.
\end{Lemma}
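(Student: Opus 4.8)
The plan is to prove the lemma by exhibiting an explicit reparametrization of the given matrix, exploiting the fact that the Cauchy structure is preserved under a fractional-linear (Möbius) change of the node parameters combined with row and column rescalings. I would use exactly this freedom to normalize the first two rows. First note that in the form $a_{ij}=c_id_j/(x_i+y_j)$ every entry is nonzero (the finite rows because $c_id_j\neq0$ and $x_i+y_j\neq0$, and, in the extended case, the extended row because $c_\infty d_j\neq0$). Hence the quantities $\hat d_j:=a_{1j}$ and $\hat y_j:=a_{1j}/a_{2j}$ are well-defined nonzero elements of $\mathbb{F}_q$, and by construction $a_{1j}=\hat d_j$ and $a_{2j}=\hat d_j\hat y_j^{-1}$. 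The entire content of the lemma is therefore to show that the pair $(\hat d_j,\hat y_j)$ extends to a genuine (extended) Cauchy parametrization of the whole matrix $A$: one must produce $\hat c_i,\hat x_i$ with $\hat x_1=\infty$ (so that row $1$ is the extended row), $\hat x_2=0$, $\hat c_2=1$, and $a_{ij}=\hat c_i\hat d_j/(\hat x_i+\hat y_j)$ for all $i,j$, all subject to the defining Cauchy constraints that the $\hat x_i$ be distinct, the $\hat y_j$ be distinct, $\hat c_i\neq0$, and $\hat x_i+\hat y_j\neq0$.

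I would establish this by solving for $\hat x_i,\hat c_i$ row by row. Substituting the original Cauchy form into the relation $\hat x_i+\hat y_j=\hat c_i\,a_{1j}/a_{ij}$ and demanding that the right-hand side be independent of $j$, a short computation forces $\hat c_i=c_i(x_1-x_2)/(c_2(x_1-x_i))$ and $\hat x_i=c_1(x_i-x_2)/(c_2(x_1-x_i))$. In particular $\hat x_2=0$ and $\hat c_2=1$, while $\hat x_i$ is a fractional-linear image of $x_i$, so the $\hat x_i$ are pairwise distinct and finite for $i\geq2$ and $\hat x_1=\infty$. Likewise $\hat y_j=(c_1/c_2)(x_2+y_j)/(x_1+y_j)$ is a fractional-linear image of $y_j$, hence the $\hat y_j$ are distinct; and $\hat x_i+\hat y_j=\hat c_i\,a_{1j}/a_{ij}\neq0$ because it is a product of nonzero quantities. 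This verifies all the Cauchy conditions. Since $\hat x_1=\infty$, the first row is automatically the extended one, so a plain Cauchy matrix is here re-expressed as an extended Cauchy matrix; this is harmless, because the desired conclusion $a_{1j}=\hat d_j$ is precisely the extended-row shape.

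The extended-Cauchy input, and the dependence on which two rows carry the labels $1$ and $2$, are handled uniformly by the same fractional-linear map: choosing $\phi$ on $\mathbb{F}_q\cup\{\infty\}$ with $\phi(x_1)=\infty$ and $\phi(x_2)=0$ (always possible over $\mathbb{F}_q$ since $x_1\neq x_2$, e.g. $\phi(t)=(t-x_2)/(t-x_1)$, with the obvious affine or inversion variants when $x_1$ or $x_2$ is already $\infty$), and then replacing each node $x_i$ by $\phi(x_i)$ and each $-y_j$ by $\phi(-y_j)$, transforms any Cauchy or extended-Cauchy parametrization into one with $\hat x_1=\infty$ and $\hat x_2=0$, the induced row and column factors being absorbed into $\hat c_i$ and $\hat d_j$. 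I expect the main point needing care to be the degeneracy check rather than any conceptual obstacle: one must rule out that some $\hat y_j$ collapses to $0$ or $\infty$ and that no denominator vanishes. But these are exactly the original constraints in disguise — $\hat y_j=0\iff x_2+y_j=0$ and $\hat y_j=\infty\iff x_1+y_j=0$, both excluded — so the bookkeeping closes, and $\hat y_j=a_{1j}/a_{2j}$ is manifestly a finite nonzero field element from the start.
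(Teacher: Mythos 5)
Your proposal is correct, and its verification checks out: the formulas $\hat{x}_i=c_1(x_i-x_2)/(c_2(x_1-x_i))$, $\hat{c}_i=c_i(x_1-x_2)/(c_2(x_1-x_i))$, $\hat{y}_j=(c_1/c_2)(x_2+y_j)/(x_1+y_j)$ do satisfy $\hat{c}_i\hat{d}_j/(\hat{x}_i+\hat{y}_j)=a_{ij}$ identically, and your degeneracy checks (distinctness of the $\hat{x}_i$ and $\hat{y}_j$ under the injective fractional-linear map, $\hat{y}_j\neq 0,\infty$, $\hat{x}_i+\hat{y}_j\neq 0$) close all the Cauchy conditions. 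Note that the paper offers no proof of this lemma at all: it simply restates it ``from the proof of [Lemma 1]'' of Roth and Lempel, and the mechanism behind that citation is exactly the reparametrization freedom you exploit (row/column rescaling together with a fractional-linear change of the locators sending $x_1\mapsto\infty$, $x_2\mapsto 0$). So your argument is essentially the intended one, with the added value of being self-contained and explicit where the paper defers to a reference; the only place to tighten is the final paragraph, where the extended-input and relabelling cases are dispatched by appeal to ``obvious variants'' of $\phi$ rather than written out, though your uniform observation that $\hat{d}_j=a_{1j}$ and $\hat{y}_j=a_{1j}/a_{2j}$ are automatically finite and nonzero makes that routine.
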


Summarizing the main results from
\cite{G. Seroussi} and \cite{A. Lempel1}, we have:
\begin{Lemma}\label{GRScond}
An $[n,k]$ linear code over $\mathbb{F}_{q}$ with systematic generator matrix $(I_{k}|A)$ is a GRS code if and only if there exist nonzero elements $d_{1},\dots,d_{n-k},c_{1},\dots,c_{k-2}$, distinct nonzero $x_{1},\dots,x_{k-2}$ and distinct nonzero $y_{1},\dots,y_{n-k}$ in $\mathbb{F}_{q}$ with $x_{i}+y_{j}\neq0$ for all $i,j$ such that the first row of $A=(a_{ij})$ is $a_{1j}=d_{j}$ for $1\leq j\leq n-k$, the second row of $A=(a_{ij})$ is $a_{2j}=d_{j}y_{j}^{-1}$ for $1\leq j\leq n-k$, and the $i-$th row of $A=(a_{ij})$ is $a_{ij}=\frac{c_{i-2}d_{j}}{x_{i-2}+y_{j}}$ for all $3\leq i\leq k$ and $1\leq j\leq n-k$.
\end{Lemma}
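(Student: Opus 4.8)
The plan is to deduce the statement by combining the characterization of GRS codes via (extended) Cauchy matrices established in \cite{G. Seroussi} with the normalization of the first two rows furnished by Lemma \ref{rothcauchy}. Throughout I read the asserted shape of $A$ as an extended Cauchy matrix whose extended (first) row carries the degenerate parameter ``$x=\infty$'' and whose second row carries the parameter ``$x=0$'', the remaining rows being the generic Cauchy rows.

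For the sufficiency direction I would start from a matrix $A=(a_{ij})$ of the prescribed shape and exhibit it directly as an extended Cauchy matrix, after which \cite{G. Seroussi} yields that $(I_k|A)$ generates a GRS code. The first row $(d_1,\dots,d_{n-k})$ plays the role of the extended row with $c_\infty=1$; deleting it leaves the $(k-1)\times(n-k)$ matrix formed by rows $2,\dots,k$, which I would identify as an ordinary Cauchy matrix with column data $d_1,\dots,d_{n-k}$ and $y_1,\dots,y_{n-k}$ and row data given by the $x$-parameters $0,x_1,\dots,x_{k-2}$ together with the $c$-parameters $1,c_1,\dots,c_{k-2}$; here $a_{2j}=d_jy_j^{-1}=\frac{1\cdot d_j}{0+y_j}$ is exactly the $x=0$ row. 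The Cauchy conditions then need only routine checking against the hypotheses: the $c$- and $d$-values are nonzero, the $y_j$ are distinct, the row parameters $0,x_1,\dots,x_{k-2}$ are distinct because the $x_i$ are distinct and nonzero, and $x+y\neq 0$ holds for every pair since $x_i+y_j\neq 0$ by assumption while $0+y_j=y_j\neq 0$.

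For the necessity direction I would invoke \cite{G. Seroussi} in the other direction: if $(I_k|A)$ is GRS then $A$ is a Cauchy matrix (when no $\alpha_i=\infty$) or an extended Cauchy matrix (when one $\alpha_i=\infty$). In either case Lemma \ref{rothcauchy} applies and re-expresses $A$ so that $a_{1j}=d_j$ and $a_{2j}=d_jy_j^{-1}$. In this normalized parametrization row $1$ is the extended row and rows $2,\dots,k$ form a Cauchy matrix sharing the same $d_j,y_j$; matching $a_{2j}=d_jy_j^{-1}$ against $c_2d_j/(x_2+y_j)$ for all $j$ gives $x_2=(c_2-1)y_j$, which by distinctness of the $y_j$ forces $x_2=0$ and $c_2=1$. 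Writing the remaining rows as $a_{ij}=c_id_j/(x_i+y_j)$ for $3\le i\le k$ and relabeling $x_{i-2}:=x_i$, $c_{i-2}:=c_i$ then produces the asserted form, and the constraints transfer: distinctness of the $x_{i-2}$ and $y_j$ and nonvanishing of the $c$'s and $d$'s come from the Cauchy conditions, the $x_{i-2}$ are nonzero because they differ from $x_2=0$, and $y_j\neq 0$ follows from $0+y_j\neq 0$.

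The main obstacle, and the only step that is not bookkeeping, is the necessity direction's reliance on Lemma \ref{rothcauchy}: one must be certain that a matrix which is merely Cauchy (with no point at $\infty$) can legitimately be rewritten with a pure ``$d_j$'' first row, that is, that the M\"obius and scaling freedom in the Cauchy parametrization suffices to send one row parameter to $\infty$ and another to $0$ simultaneously while keeping $d_j$ and $y_j$ consistent between the extended row and the Cauchy rows. I would lean on Lemma \ref{rothcauchy} (from the proof of \cite[Lemma 1]{A. Lempel1}) for precisely this point and confine the rest of the argument to verifying that no distinctness or nonvanishing hypothesis is lost under the relabeling.
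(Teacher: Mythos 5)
Your proof is correct and takes essentially the same route as the paper: the paper states Lemma \ref{GRScond} without proof, presenting it as a summary of the Cauchy/extended-Cauchy characterization of systematic GRS generator matrices from \cite{G. Seroussi} combined with the first-two-rows normalization of Lemma \ref{rothcauchy} (restated from \cite[Lemma 1]{A. Lempel1}), which are exactly the two ingredients you assemble for sufficiency and necessity. Your reading of the prescribed matrix as an extended Cauchy matrix whose first row is the extended row and whose second row is the $x=0$ row is precisely the paper's own remark immediately following the lemma.
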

Note that the matrix in Lemma \ref{GRScond}
is an extended Cauchy matrix with its first row
being the extended row. For convenience,
we will refer to the matrix in Lemma
\ref{GRScond} as a Cauchy matrix in this paper.

Next, we introduce the definition of generalized twisted
Reed-Solomon (GTRS) codes, drawing on the
definitions provided in \cite{P. Beelen, P. Beelen1}.
A {\it multiset} is a generalization of a set that allows
for element repetition. The {\it type number} of a multiset
is the count of distinct elements it contains.
The {\it multiplicity} of an element refers
to how many times it appears in the multiset.
A multiset is usually expressed by listing its
distinct elements along with their multiplicities.
For example, the multiset $\{a,b,c,a,c\}$
can be denoted by $\{2\cdot a,1\cdot b,2\cdot c\}$.
We say that a submultiset $M$ is included in a set $S$ if
 $x\in S$ for all $x\in M$ and the multiplicity of
 each elements in $M$ is less than or equal the multiplicity in $S$.

Let $k\leq n$ and $\ell\geq 1$ be positive integers. Let $\bm h=\{h_{1},h_{2},...,h_{\ell}\}\subseteq\{0,1,...,k-1\}$ where $\bm h$ is a multiset such that $\bm h=\{\ell_{1}\cdot h^{'}_{1},\ell_{2}\cdot h^{'}_{2},\dots,\ell_{s}\cdot h^{'}_{s}\}$ where $\sum_{i=1}^{s}\ell_{i}=\ell$ and $0\leq h^{'}_{1}<h^{'}_{2}<\dots<h^{'}_{s}\leq k-1$. Let $\bm t=\{t_{1},t_{2},...,t_{\ell}\}\subseteq\{1,...,n-k\}$ and $\bm \eta=\{\eta_{1},\eta_{2},...,\eta_{\ell}\}\subseteq\mathbb{F}_{q}^{*}$ where $\bm t$ and $\bm\eta$ are multisets.  The set of $[\bm t,\bm h,\bm \eta]$-{\it twisted polynomials} in $x$ is defined as

$$
\mathcal{P}_{k,n}[\bm t,\bm h,\bm\eta]
=\Big{\{}\sum_{i=0}^{k-1}f_{i}x^{i}+\sum_{j=1}^{\ell}\eta_{j}f_{h_{j}}x^{k-1+t_{j}}
\,|\,
f_{i}\in\mathbb{F}_{q},i=0,\dots,k-1\Big{\}}\subseteq\mathbb{F}_{q}[x].
$$

Let $\bm\alpha=(\alpha_{1},\alpha_{2},...,\alpha_{n})$ and $\bm v=(v_1,v_2,\dots, v_n)$ be two vectors of length $n$ over $\mathbb{F}_q$. We define the evaluation map related to $\bm\alpha$ and $\bm v$ as follows:

$$
ev_{\bm\alpha, \bm v}:\mathbb{F}_{q}[x]\rightarrow\mathbb{F}_{q}^{n},~f(x) \mapsto (v_1f(\alpha_{1}),v_2f(\alpha_{2}),...,v_nf(\alpha_{n})).
$$
The notion of  generalized twisted Reed-Solomon(GTRS)code is reproduced below, see \cite{WEIDONG}.
\begin{Definition}\label{def-2}
Let the entries of $\bm\alpha=(\alpha_{1},\alpha_{2},...,\alpha_{n})\in\mathbb{F}_{q}^{n}$ be pairwise distinct
and let  $\bm v=(v_{1},v_{2},...,v_{n})\in (\mathbb{F}_{q}^{*})^{n}$.
Let $\mathbf{t},\mathbf{h},\bm\eta,\ell$ and
$\mathcal{P}_{k,n}[\mathbf{t},\mathbf{h},\bm\eta]$ be defined as above. The generalized twisted Reed-Solomon(GTRS)code of length $n$, dimension $k$, twists $\ell$ and locators $\bm\alpha$ is defined by
$$
GTRS_{k,n}[\bm\alpha,\bm t,\bm h,\bm\eta,\bm v]=\{ev_{\bm\alpha, \bm v}(f)\,|\, f\in\mathcal{P}_{k,n}[\mathbf{t},\mathbf{h},\bm\eta]\}.
$$
\end{Definition}


\begin{Remark}{\rm
Clearly, $\mathcal{P}_{k,n}[\mathbf{t},\mathbf{h},\bm\eta]$ is a $k-$dimensional linear space and the map $ev_{\bm\alpha, \bm v}(f)$ is a linear map from $\mathcal{P}_{k,n}[\mathbf{t},\mathbf{h},\bm\eta]$ to $\mathbb{F}_{q}^{n}$. Since $1\leq t_{i}\leq n-k$ for all $i=1,\dots,\ell$, thus the degree of all the polynomials in $\mathcal{P}_{k,n}[\mathbf{t},\mathbf{h},\bm\eta]$ is less than $n$ which implies that $ev_{\bm\alpha, \bm v}(f)$ is injective, so we know the code in Definition \ref{def-2} is of dimension $k$. The condition on $\bm h$ ensures each $f_{h_{j}}$ is in the coefficients set $\{f_{0},f_{1},\dots,f_{k-1}\}$ of $f(x)$ for all $j=1,\dots,\ell$.}
\end{Remark}

A polynomial in $\mathcal{P}_{k,n}[\mathbf{t},\mathbf{h},\bm\eta]$ is of the form
$f(x) = \sum_{i=0}^{k-1}f_{i}x^{i}+\sum_{j=1}^{\ell}\eta_{j}f_{h_{j}}x^{k-1+t_{j}} $ and we denote by
$f(\bm\alpha)=(f(\alpha_{1}),f(\alpha_{2}),\dots,f(\alpha_{n}))$ its evaluation in ${\bm \alpha}$.
A generator matrix for the $GTRS_{n,k}[\bm\alpha,\bm t,\bm h,\bm\eta,\bm v]$ code is thus obtained by finding $k$ such polynomials $f^{(0)},\ldots,f^{(k-1)}$ whose evaluation vectors are linearly independent:
$$\left(
\begin{matrix}
f^{(0)}(\bm\alpha)\\
f^{(1)}(\bm\alpha)\\
\vdots\\
f^{(k-1)}(\bm\alpha)\\
\end{matrix}
\right)D_{\bm v}$$
where $D_{\bm v}$ is a diagonal matrix with the coefficients of ${\bf v}$ on its diagonal. Choose $f^{(l)}(x)$ such that $f_{l}=1$ and $f_{m}=0$ for $m\neq l$, that is
\[
f^{(l)}(x) = x^l +
\sum_{j,h_j=l}\eta_jx^{k-1+t_j}
\]
so $f^{(l)}(x) = x^l$ if there is no $j$ such that $h_j=l$.

Set $\bm\alpha^{i}=(\alpha_{1}^{i},\alpha_{2}^{i},\dots,\alpha_{n}^{i})$, we can write
$$
\left(
\begin{matrix}
f^{(0)}(\bm\alpha)\\
f^{(1)}(\bm\alpha)\\
\vdots\\
f^{(k-1)}(\bm\alpha)\\
\end{matrix}
\right)=
\left(
\begin{matrix}
\bm 1\\
\bm\alpha\\
\vdots\\
\bm\alpha^{k-1}
\end{matrix}
\right)+
\left(
\begin{matrix}
\bm e_{0}\\
\bm e_{1}\\
\vdots\\
\bm e_{k-1}\\
\end{matrix}
\right),
$$
where
${\bm e}_l = {\bf 0} $ when
$l \not\in \{h_{1},h_{2},\dots,h_{\ell}\}$, and
$
{\bm e}_l = \sum_{j,h_j=l}\eta_j{\bm\alpha}^{k-1+t_j}
$ else.

We call the above generator matrix the {\it standard generator matrix} of a GTRS code.

Specially, when $\ell=1$, $\bm t=(t),~\bm h=(h)$ and $\bm\eta=(\eta)$ where $1\leq t\leq n-k,~0\leq h\leq k-1$ and $\eta\neq0$. In this case, denote $\bm t,~\bm h$ and $\bm\eta$ as $t,~h$ and $\eta$, respectively. Then the standard generator matrix of $GTRS_{k,n}[\bm\alpha, t, h,\eta,\bm v]$ is given by
$$
\left(
\begin{matrix}
\bm 1\\
\bm\alpha\\
\vdots\\
\bm\alpha^{h-1}\\
\bm\alpha^{h}+\eta\bm\alpha^{k-1+t}\\
\bm\alpha^{h+1}\\
\vdots\\
\bm\alpha^{k-1}
\end{matrix}
\right).
$$

\section{Non-GRS property of GTRS codes}
In \cite{C.ZHU,P. Beelen1}, many GTRS codes have been proved to be non-GRS when $\ell=1$. For $(t,h)=(1,0),(1,k-1)$, GTRS codes are non-GRS when $3\leq k\leq \frac{n}{2}$ (see \cite{P. Beelen1}). However, it is unknown whether GTRS codes of above parameters are GRS or not when $\frac{n}{2}< k\leq n-3$.
In this section, by the
Schur product method, we study the non-GRS property of $GTRS_{k,n}[\bm\alpha, 1, k-1,\eta,\bm v]$ and $GTRS_{k,n}[\bm\alpha, 1, 0,\eta,\bm v]$ for $\frac{n}{2}<k\leq n-3$.

 By the Schur product method, for an $[n,k]$ linear code with $k\leq\frac{n}{2}$, if the dimension of its square code is not $2k-1$ then it is non-GRS. It is known that the dual codes of GRS codes are also GRS codes. If the dual codes of $GTRS_{k,n}[\bm\alpha, 1, k-1,\eta,\bm v]$ and $GTRS_{k,n}[\bm\alpha, 1, 0,\eta,\bm v]$ are non-GRS when $\frac{n}{2}<k\leq n-3$ then themselves are also non-GRS.

For $1\leq i\leq n$, suppose $u_{i}=\prod_{j=1,j\neq i}^{n}(\alpha_{i}-\alpha_{j})^{-1}$. For any integer $m$, suppose $\lambda_{m}=\sum_{i=1}^{n}u_{i}\alpha_{i}^{m}$, then $\lambda_{m}=0$ for $0\leq m\leq n-2$, and $\lambda_{n-1}=1,~\lambda_{-1}\neq 0$ (see \cite{WEIDONG,zhangjun}).

\begin{Proposition}\label{pro31}
The code $GTRS_{k,n}[\bm\alpha, 1, k-1,\eta,\bm v]$ is non-GRS for $3\leq k\leq n-3$.
\end{Proposition}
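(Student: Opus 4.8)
The plan is to reduce the whole statement to the Schur product criterion of Mirandola--Z\'emor, which applies only in dimension at most $\frac{n-1}{2}$, by moving to the dual code whenever $k$ is large. Since the dual of a GRS code is GRS, $C=GTRS_{k,n}[\bm\alpha,1,k-1,\eta,\bm v]$ is non-GRS if and only if $C^{\perp}$ is non-GRS. The range $3\le k\le \frac{n}{2}$ is already the content of \cite{P. Beelen1} (and alternatively the square computation below, applied directly to $C$, settles $k\le\frac{n-1}{2}$), so the real work is the range $\frac{n}{2}<k\le n-3$, where $\dim C^{\perp}=n-k$ satisfies $3\le n-k\le\frac{n-1}{2}$ and the Schur method becomes available for $C^{\perp}$.

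First I would compute $C^{\perp}$ explicitly. Writing $C=ev_{\bm\alpha,\bm v}(V)$ with $V=\langle 1,x,\dots,x^{k-2},\,x^{k-1}+\eta x^{k}\rangle$, a vector $\bm c$ lies in $C^{\perp}$ exactly when $\bm w:=(v_{1}c_{1},\dots,v_{n}c_{n})$ annihilates all of $V$. The $k-1$ conditions coming from $1,x,\dots,x^{k-2}$ say precisely that $\bm w\in GRS_{n,n-k+1}(\bm\alpha,\bm u)$, i.e. $w_{i}=u_{i}g(\alpha_{i})$ for some $g$ with $\deg g\le n-k$, where $u_{i}=\prod_{j\ne i}(\alpha_{i}-\alpha_{j})^{-1}$. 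Plugging this into the one remaining condition $\sum_{i}w_{i}\alpha_{i}^{k-1}+\eta\sum_{i}w_{i}\alpha_{i}^{k}=0$ and using $\lambda_{m}=\sum_{i}u_{i}\alpha_{i}^{m}$ (which vanishes for $0\le m\le n-2$ and equals $1$ for $m=n-1$), only the two top coefficients of $g$ survive, and the whole condition collapses to a single linear relation $\eta\,g_{n-k-1}+(1+\eta\lambda_{n})\,g_{n-k}=0$. Thus $C^{\perp}=\{(u_{i}g(\alpha_{i})/v_{i})_{i}:g\in W\}$, where $W$ is the $(n-k)$-dimensional polynomial space cut out by this relation.

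Next I would identify $W\cdot W$. When $1+\eta\lambda_{n}\ne 0$, rescaling the leading basis vector gives $W=\langle 1,\dots,x^{n-k-2},\,x^{n-k-1}+\eta'x^{n-k}\rangle$ with $\eta'=-\eta/(1+\eta\lambda_{n})\ne 0$; when $1+\eta\lambda_{n}=0$ the relation forces $g_{n-k-1}=0$ and $W=\langle 1,\dots,x^{n-k-2},\,x^{n-k}\rangle$. In either case a degree-by-degree count of the products of a basis of $W$ (setting $d=n-k\ge 3$) shows that $W\cdot W$, viewed as a space of polynomials of degree $\le 2d$, contains every monomial $1,\dots,x^{2d-2}$ and exactly one further independent direction among $\langle x^{2d-1},x^{2d}\rangle$ coming from the square of the top-degree generator, so $\dim(W\cdot W)=2d$, with precisely one top-degree direction missing. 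Since $2(n-k)\le n-1$ throughout this range, evaluation is injective on $W\cdot W$, hence $\dim(C^{\perp})^{2}=2(n-k)\ne 2(n-k)-1$, and the Schur product criterion forces $C^{\perp}$, and therefore $C$, to be non-GRS.

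The main obstacle will be the explicit determination of $C^{\perp}$: one must carefully track which coefficients of $g$ remain after the $\lambda_{m}$ identities are applied (in particular the term $\lambda_{n}$, which need not vanish), check that the surviving relation is nondegenerate so that indeed $\dim W=n-k$, and confirm that the resulting leading generator is genuinely twisted (either $\eta'\ne 0$, automatic from $\eta\ne 0$, or the degenerate $x^{n-k}$ form) so that the extra dimension in the Schur square really appears. The ensuing count for $W\cdot W$ is routine but must be handled with care near the top degrees; the borderline value $k=\frac{n}{2}$ is deliberately excluded from this dual argument and is covered by the cited result.
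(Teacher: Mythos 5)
Your proposal is correct and follows essentially the same route as the paper: pass to the dual code, identify a generator matrix for it (your space $W$ gives, up to row scaling, exactly the parity-check matrix the paper quotes from \cite{WEIDONG}), and show that the Schur square of the dual has dimension $2(n-k)$ rather than $2(n-k)-1$, which rules out GRS, deferring the borderline case $k=\tfrac{n}{2}$ to \cite{P. Beelen1}. The only differences are cosmetic: you derive the dual from the $\lambda_{m}$ identities and count monomials in $W\cdot W$ (treating $1+\eta\lambda_{n}=0$ separately), whereas the paper cites the parity-check matrix and exhibits explicit row products that recover the same $2(n-k)$ independent vectors.
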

\begin{proof}
The case $3\leq k\leq\frac{n}{2}$ has been solved in \cite{P. Beelen1}. In the following, we suppose $k>\frac{n}{2}$ and prove the non-GRS property via its dual's non-GRS property.

By \cite[Theorem 5]{WEIDONG}, we know a parity check matrix of $GTRS_{k,n}[\bm\alpha, 1, k-1,\eta,\bm v]$ is
$$
H=\left(
\begin{matrix}
\bm 1\\
\bm\alpha\\
\vdots\\
\bm\alpha^{n-k-2}\\
\bm\alpha^{n-k}-\frac{1+\eta \lambda_{n}}{\eta}\bm\alpha^{n-k-1}
\end{matrix}
\right)D_{\frac{\bm u}{\bm v}}
$$
where $\frac{\bm u}{\bm v}=(\frac{u_1}{v_1},\dots,\frac{u_n}{v_n})$ and $D_{\frac{\bm u}{\bm v}}$ is a diagonal matrix with the nonzero coefficients of $\frac{\bm u}{\bm v}$ on its diagonal. Since the nonzero column multiple make no difference on the non-GRS property, without loss of generality, we assume
$$
H=\left(
\begin{matrix}
\bm 1\\
\bm\alpha\\
\vdots\\
\bm\alpha^{n-k-2}\\
\bm\alpha^{n-k}-\frac{1+\eta \lambda_{n}}{\eta}\bm\alpha^{n-k-1}
\end{matrix}
\right).
$$

Let $\bm h_{i}$ be the $i$th row of $H$, $D$ be the square code of $GTRS_{k,n}[\bm\alpha, 1, k-1,\eta,\bm v]^{\perp}$. First, since $\bm h_{i}*\bm h_{j}=\bm\alpha^{i+j-2}$ for all $1\leq i,j\leq n-k-1$, we know $D$ contains the vectors $\bm\alpha^{i}$ where $0\leq i\leq 2(n-k)-4$.

Then the vector $\bm\alpha^{2(n-k)-3}=\bm h_{n-k-2}*\bm h_{n-k}+\frac{1+\eta \lambda_{n}}{\eta}\bm\alpha^{2(n-k)-4}$ belongs to $D$.

Similarly, the vector $\bm\alpha^{2(n-k)-2}=\bm h_{n-k-1}*\bm h_{n-k}+\frac{1+\eta \lambda_{n}}{\eta}\bm\alpha^{2(n-k)-3}$ also belongs to $D$.

Finally, the vector $\bm\alpha^{2(n-k)}-2\frac{1+\eta \lambda_{n}}{\eta}\bm\alpha^{2(n-k)-1}=\bm h_{n-k}*\bm h_{n-k}-(\frac{1+\eta \lambda_{n}}{\eta})^{2}\bm\alpha^{2(n-k)-2}$ belongs to $D$.

Thus the dimension of $D$ is at least $2k$ which implies $GTRS_{k,n}[\bm\alpha, 1, k-1,\eta,\bm v]^{\perp}$ is non-GRS. Then $GTRS_{k,n}[\bm\alpha, 1, k-1,\eta,\bm v]$ is non-GRS.
\end{proof}

In the following, we study the non-GRS property of $GTRS_{k,n}[\bm\alpha, 1, 0,\eta,\bm v]$. First, we suppose all coefficients of $\bm\alpha$ and $\bm v$ are nonzero and $\eta\neq 0$, in \cite{zhangjun}, the authors introduced the code $\mathcal{C}_{k-1}(\bm\alpha,k,\eta,\bm v)$ which is generated by
$$
\left(
\begin{matrix}
\bm 1\\
\bm\alpha\\
\vdots\\
\bm\alpha^{k-2}\\
\bm\alpha^{k-1}+\eta\bm\alpha^{-1}
\end{matrix}
\right)D_{\bm v}.
$$

They proved that (see \cite[Corollary 1]{zhangjun}) $\mathcal{C}_{k-1}(\bm\alpha,k,\eta,\bm v)^{\perp}$ is generated by
$$
\left(
\begin{matrix}
\bm\alpha\\
\vdots\\
\bm\alpha^{n-k-1}\\
\bm\alpha^{n-k}+\eta^{'}\bm 1
\end{matrix}
\right)D_{\frac{\bm u}{\bm v}},
$$
where $\eta^{'}=-\frac{\lambda_{n-1}}{\eta\lambda_{-1}}=-\frac{1}{\eta\lambda_{-1}}\neq0$.

Obviously, $\mathcal{C}_{k-1}(\bm\alpha,k,\eta,\bm v)$ is also generated by
$$
\left(
\begin{matrix}
\bm 1\\
\bm\alpha\\
\vdots\\
\bm\alpha^{k-2}\\
\eta^{-1}\bm\alpha^{k-1}+\bm\alpha^{-1}
\end{matrix}
\right)D_{\bm v}.
$$

Since
$$
\left(
\begin{matrix}
\bm 1\\
\bm\alpha\\
\vdots\\
\bm\alpha^{k-2}\\
\eta^{-1}\bm\alpha^{k-1}+\bm\alpha^{-1}
\end{matrix}
\right)D_{\bm v}=
\left(
\begin{matrix}
\bm\alpha\\
\vdots\\
\bm\alpha^{k-2}\\
\bm\alpha^{k-1}\\
\bm 1+\eta^{-1}\bm\alpha^{k}\\
\end{matrix}
\right)D_{\bm v*\bm\alpha^{-1}},
$$
then $\mathcal{C}_{k-1}(\bm\alpha,k,\eta,\bm v)=GTRS_{k,n}[\bm\alpha, 1, 0,\eta^{-1},\bm v*\bm\alpha^{-1}]$.

\begin{Proposition}\label{pro32}
The code $GTRS_{k,n}[\bm\alpha, 1, 0,\eta,\bm v]$ is non-GRS for all $3\leq k\leq n-4$. Specially, it is non-GRS for all $3\leq k\leq n-3$ if all coefficients of $\bm\alpha$ are nonzero.
\end{Proposition}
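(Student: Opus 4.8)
The plan is to follow the strategy of Proposition~\ref{pro31} and establish the statement through the non-GRS property of the dual code, handling first the special case in which all $\alpha_i\neq 0$ and then bootstrapping to the general case by puncturing. Since the range $3\le k\le \frac{n}{2}$ is already covered by \cite{P. Beelen1}, I only need to treat $k>\frac{n}{2}$; there the dual has dimension $n-k$ with $2(n-k)\le n-1$, so the Schur product method is available.

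For the special case, assume every $\alpha_i\neq 0$. I will use the identity $GTRS_{k,n}[\bm\alpha,1,0,\eta,\bm v]=\mathcal{C}_{k-1}(\bm\alpha,k,\eta^{-1},\bm v*\bm\alpha)$ obtained above, together with the explicit dual of $\mathcal{C}_{k-1}$ from \cite{zhangjun}. After discarding the diagonal factor $D_{\bm u/(\bm v*\bm\alpha)}$ (a nonzero column scaling, which does not affect the GRS property), the dual is generated by the rows $\bm\alpha,\bm\alpha^{2},\dots,\bm\alpha^{n-k-1},\bm\alpha^{n-k}+\eta'\bm 1$ for some $\eta'\neq 0$. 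Writing $m=n-k$ and letting $\bm g_1,\dots,\bm g_m$ denote these rows, I will compute the Schur square $E$ of this code: the products $\bm g_i*\bm g_j$ with $i,j\le m-1$ yield $\bm\alpha^{2},\dots,\bm\alpha^{2m-2}$; the products $\bm g_i*\bm g_m$ recover $\bm\alpha$ (from $i=1$, using $m\ge 3$) and $\bm\alpha^{2m-1}$ (from $i=m-1$), the intermediate ones only reproducing powers already present; and $\bm g_m*\bm g_m$ contributes $\bm\alpha^{2m}+(\eta')^{2}\bm 1$. Thus $E=\langle \bm\alpha,\dots,\bm\alpha^{2m-1},\,\bm\alpha^{2m}+(\eta')^{2}\bm 1\rangle$. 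Because $2m+1\le n$ is equivalent to $k>\frac{n}{2}$, the vectors $\bm 1,\bm\alpha,\dots,\bm\alpha^{2m}$ are linearly independent, so $\dim E=2m=2(n-k)$, which differs from $2(n-k)-1$. As the square of any GRS code of dimension $n-k$ has dimension $2(n-k)-1$, the dual is non-GRS, and hence so is $GTRS_{k,n}[\bm\alpha,1,0,\eta,\bm v]$. Combined with \cite{P. Beelen1}, this gives the range $3\le k\le n-3$, the constraint $k\le n-3$ being exactly what guarantees $m\ge 3$ so that $\bm\alpha$ reappears in $E$.

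For the general statement I will remove the nonzero hypothesis by puncturing. If all $\alpha_i\neq 0$ the special case already applies, so suppose $\alpha_{i_0}=0$ for a (necessarily unique) coordinate $i_0$. Since the twist degree $k-1+t=k$ does not depend on the length, puncturing at $i_0$ sends $GTRS_{k,n}[\bm\alpha,1,0,\eta,\bm v]$ to $GTRS_{k,n-1}[\bm\alpha',1,0,\eta,\bm v']$, where $\bm\alpha'$ is $\bm\alpha$ with the zero entry deleted and therefore has all entries nonzero. The special case at length $n-1$ shows this punctured code is non-GRS for $3\le k\le (n-1)-3=n-4$; since a punctured GRS code is again GRS, the original code is non-GRS in the same range, which is the asserted bound.

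The main obstacle is the special-case dual computation, and in particular getting the dual into the clean shape $\bm\alpha,\dots,\bm\alpha^{n-k-1},\bm\alpha^{n-k}+\eta'\bm 1$: this rests on the identity with $\mathcal{C}_{k-1}$ and hence genuinely needs all $\alpha_i\neq 0$. When a zero locator is present the true dual instead carries a twisted generator of degree $n-1$, whose Schur square is much less transparent, and the puncturing device is precisely what lets me avoid analyzing it, at the cost of shrinking the range from $n-3$ to $n-4$. I will also have to check the boundary bookkeeping so that the interval $k\le\frac{n}{2}$ from \cite{P. Beelen1} and the interval $k\ge\lceil\frac{n+1}{2}\rceil$ from the dual argument overlap with no integer gap.
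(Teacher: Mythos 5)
Your proposal is correct and follows essentially the same route as the paper's own proof: the identity with $\mathcal{C}_{k-1}(\bm\alpha,k,\eta^{-1},\bm v*\bm\alpha)$ and its explicit dual from \cite{zhangjun}, the Schur-square computation showing the dual's square has dimension $2(n-k)$ rather than $2(n-k)-1$, and puncturing at the zero locator to handle the general case (which is exactly why the bound drops to $n-4$ there). In fact your version is slightly cleaner than the paper's, which contains minor typos (it writes $\bm\alpha^{n-k-1}$ where $\bm\alpha^{n-k+1}$ is meant, $\eta''$ where $(\eta'')^{2}$ is meant, and ``at least $2k$'' where the count is $2(n-k)$); your computation gives the corrected identities.
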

\begin{proof}
The case $3\leq k\leq\frac{n}{2}$ has been solved in \cite{P. Beelen1}. First, suppose $k>\frac{n}{2}$ and all coefficients of $\bm\alpha$ are nonzero. Then $GTRS_{k,n}[\bm\alpha, 1, 0,\eta,\bm v]=\mathcal{C}_{k-1}(\bm\alpha,k,\eta^{-1},\bm v*\bm\alpha)$.

By~\cite[Corollary 1]{zhangjun}, $GTRS_{k,n}[\bm\alpha, 1, 0,\eta,\bm v]^{\perp}$ is generated by
$$
G=
\left(
\begin{matrix}
\bm\alpha\\
\vdots\\
\bm\alpha^{n-k-1}\\
\bm\alpha^{n-k}+\eta^{''}\bm 1
\end{matrix}
\right)D_{\frac{\bm u}{\bm v*\bm\alpha}},
$$
where $\eta^{''}=-\frac{1}{\eta\lambda_{-1}}\neq0$.

Similarly, we assume
$$
G=
\left(
\begin{matrix}
\bm\alpha\\
\vdots\\
\bm\alpha^{n-k-1}\\
\bm\alpha^{n-k}+\eta^{''}\bm 1
\end{matrix}
\right).
$$

Let $\bm g_i$ be the $i$th row of $G$, $B$ be the square code of $GTRS_{k,n}[\bm\alpha, 1, 0,\eta,\bm v]^{\perp}$. Since $\bm g_{i}*\bm g_{j}=\bm\alpha^{i+j}$ for all $1\leq i,j\leq n-k-1$, we know $B$ contains the vectors $\bm\alpha^{i}$ where $2\leq i\leq 2(n-k)-2$.

Thus the vector $\bm\alpha^{2(n-k)-1}=\bm g_{n-k-1}*\bm g_{n-k}-\eta^{''}\bm\alpha^{n-k-1}$ belongs to $B$.

Similarly, the vector $\bm\alpha=\frac{1}{\eta^{''}}(\bm g_{1}*\bm g_{n-k}-\bm\alpha^{n-k-1})$ also belongs to $B$.

Then the vector $\bm\alpha^{2(n-k)}+\eta^{''}\bm 1=\bm g_{n-k}*\bm g_{n-k}-2\eta^{''}\bm\alpha^{n-k}$ belongs to $B$.

Thus the dimension of $B$ is at least $2k$ which implies $GTRS_{k,n}[\bm\alpha, 1, 0,\eta,\bm v]^{\perp}$ is non-GRS. Then $GTRS_{k,n}[\bm\alpha, 1, 0,\eta,\bm v]$ is non-GRS for $\frac{n}{2}\leq k\leq n-3$ when all coefficients of $\bm\alpha$ are nonzero.

When $0\in\{\alpha_{1},\dots,\alpha_{n}\}$ and $\frac{n}{2}\leq k\leq n-4$, deleting the corresponding coordinate of $0$, we get a non-GRS code. Thus $GTRS_{k,n}[\bm\alpha, 1, 0,\eta,\bm v]$ is an extended code of a non-GRS code which implies itself is also non-GRS.
\end{proof}
\begin{Remark}{\rm
As we have referred, an $[n,n-2]$ MDS code is GRS since its codimension is $2$. In Proposition \ref{pro32}, we suppose $k\neq n-3$ when $0\in\{\alpha_{1},\dots,\alpha_{n}\}$. The reason is as following: after deleting a coordinate of an $[n,n-3]$ MDS code, we get an $[n-1,n-3]$ MDS code which is GRS since its codimension is $2$. Then we cannot apply the last statement in above proof for that case.}
\end{Remark}
\section{The systematic generator matrix of GTRS codes with $s=1$ }
In \cite{HUZ}, the authors gave a necessary and sufficient condition
for GTRS codes to be MDS; actually,
they also gave a condition for finding an information set for a GTRS code. In this section, we give a necessary and sufficient condition (Lemma \ref{infor})
for GTRS codes to be MDS, our condition is essentially same as the condition in \cite{HUZ}, but not same in the form. First, we fix some notations.


For any distinct elements $\alpha_{1},\alpha_{2},...,\alpha_{n}$ of $\mathbb{F}_{q}$ as
in Lemma \ref{infor} and any subset $\bm b=\{b_{1},\dots,b_{k}\}\subseteq\{1,\dots,n\}$,
consider the following linear equations

$$(f^{\bm b}_{0,i_{j}},f^{\bm b}_{1,i_{j}},\dots,f^{\bm b}_{k-1,i_{j}})
\left(
\begin{matrix}
1&1&\dots&1\\
\alpha_{b_{1}}&\alpha_{b_{2}}&\dots&\alpha_{b_{k}}\\
\vdots&\vdots&\ddots&\vdots\\
\alpha_{b_{1}}^{k-1}&\alpha_{b_{2}}^{k-1}&\dots&\alpha_{b_{k}}^{k-1}
\end{matrix}
\right)
=(\alpha_{b_{1}}^{k-1+t_{i_{j}}},\alpha_{b_{2}}^{k-1+t_{i_{j}}},\dots,\alpha_{b_{k}}^{k-1+t_{i_{j}}}),
$$
for $i=1,\dots,s$ and $j=1,\dots,\ell_{i}$. Since the Vandermonde matrix is invertible, the vector $(f^{\bm b}_{0,i_{j}},f^{\bm b}_{1,i_{j}},\dots,f^{\bm b}_{k-1,i_{j}})$ is unique for any $i,~j$. Set the polynomial $f^{\bm b}_{i_{j}}(x)=\sum_{m=0}^{k-1}f^{\bm b}_{m,i_{j}}x^{m}$ for any subsets $\bm b$ and any $i,~j$. In words, the coefficients of this polynomial are formed by a solution to the above system of linear equations.

\begin{Lemma}\label{infor}
Let $GTRS_{k,n}[\bm\alpha,\bm t,\bm h,\bm\eta,\bm v]$ be
defined as in Definition \ref{def-2}. Then
The coordinate set
$\bm b\subseteq\{1,\dots,n\}$ with $|\bm b|=k$
is an information set if and only if the following matrix is invertible:
$$
\left(
\begin{matrix}
1+\sum_{i=1}^{\ell_{1}}\eta_{1_{i}}f^{\bm b}_{h^{'}_{1},1_{i}}&\sum_{i=1}^{\ell_{1}}\eta_{1_{i}}f^{\bm b}_{h^{'}_{2},1_{i}}&\dots&\sum_{i=1}^{\ell_{1}}\eta_{1_{i}}f^{\bm b}_{h^{'}_{s},1_{i}}\\
\sum_{i=1}^{\ell_{2}}\eta_{2_{i}}f^{\bm b}_{h^{'}_{1},2_{i}}&1+\sum_{i=1}^{\ell_{2}}\eta_{2_{i}}f^{\bm b}_{h^{'}_{2},2_{i}}&\dots&\sum_{i=1}^{\ell_{2}}\eta_{2_{i}}f^{\bm b}_{h^{'}_{s},2_{i}}\\
\vdots&\vdots&\ddots&\vdots\\
\sum_{i=1}^{\ell_{s}}\eta_{s_{i}}f^{\bm b}_{h^{'}_{1},s_{i}}&\sum_{i=1}^{\ell_{s}}\eta_{s_{i}}f^{\bm b}_{h^{'}_{2},s_{i}}&\dots&1+\sum_{i=1}^{\ell_{s}}\eta_{s_{i}}f^{\bm b}_{h^{'}_{s},s_{i}}
\end{matrix}
\right)_{s\times s}.
$$
Specially, $GTRS_{k,n}[\bm\alpha,\bm t,\bm h,\bm\eta,\bm v]$ is MDS if and only if above matrix is invertible for any $\bm b\subseteq\{1,\dots,n\}$ with $|\bm b|=k$.
\end{Lemma}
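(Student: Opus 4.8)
The plan is to reduce the invertibility of the $k\times k$ column submatrix indexed by $\bm b$ to that of the displayed $s\times s$ matrix by peeling off the Vandermonde part. Since multiplying columns by the nonzero scalars $v_i$ does not affect whether a collection of columns is linearly independent, I may discard $D_{\bm v}$ and work with the standard generator matrix built from the $f^{(l)}$. Write $G_{\bm b}$ for the submatrix formed by columns $b_1,\dots,b_k$ of this matrix. The decomposition of the standard generator matrix into its Vandermonde part plus its twist part gives $G_{\bm b}=V_{\bm b}+E_{\bm b}$, where $V_{\bm b}$ is the $k\times k$ Vandermonde matrix in the pairwise distinct nodes $\alpha_{b_1},\dots,\alpha_{b_k}$ (hence invertible), and $E_{\bm b}$ has a nonzero row only in the positions $h'_1,\dots,h'_s$, the row in position $h'_a$ being $\sum_{p=1}^{\ell_a}\eta_{a_p}(\alpha_{b_1}^{k-1+t_{a_p}},\dots,\alpha_{b_k}^{k-1+t_{a_p}})$, where $t_{a_p},\eta_{a_p}$ ($p=1,\dots,\ell_a$) denote the twists and coefficients attached to the hook $h'_a$.

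The crux is the system of linear equations defining the polynomials $f^{\bm b}_{a_p}$ immediately before the statement: it says precisely that the coefficient row $(f^{\bm b}_{0,a_p},\dots,f^{\bm b}_{k-1,a_p})$, multiplied on the right by $V_{\bm b}$, yields the row $(\alpha_{b_1}^{k-1+t_{a_p}},\dots,\alpha_{b_k}^{k-1+t_{a_p}})$; equivalently, $f^{\bm b}_{a_p}$ is the unique polynomial of degree $<k$ interpolating $x^{k-1+t_{a_p}}$ at the nodes $\alpha_{b_c}$. Substituting this identity into each nonzero row of $E_{\bm b}$ factors it as $E_{\bm b}=F\,V_{\bm b}$, where $F$ is the $k\times k$ matrix whose row $h'_a$ equals $\sum_{p=1}^{\ell_a}\eta_{a_p}(f^{\bm b}_{0,a_p},\dots,f^{\bm b}_{k-1,a_p})$ and whose remaining rows vanish. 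Hence $G_{\bm b}=(I_k+F)V_{\bm b}$, and since $V_{\bm b}$ is invertible, $\bm b$ is an information set if and only if $I_k+F$ is invertible.

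It then remains to evaluate $\det(I_k+F)$. Every row of $I_k+F$ indexed by some $l\notin\{h'_1,\dots,h'_s\}$ equals the standard basis vector $e_l$, because $F$ vanishes there. Using each such row to clear, by a single row operation, the entry in column $l$ of every hook row $h'_a$ (these operations disturb no other entry, since $e_l$ has a unique nonzero coordinate), and then removing the $e_l$ rows and columns one at a time by Laplace expansion, reduces $\det(I_k+F)$ to the determinant of the submatrix of $I_k+F$ supported on the rows and columns $\{h'_1,\dots,h'_s\}$. Its $(a,b)$ entry is $\delta_{ab}+\sum_{p=1}^{\ell_a}\eta_{a_p}f^{\bm b}_{h'_b,a_p}$, which is exactly the displayed $s\times s$ matrix. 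Therefore $\bm b$ is an information set if and only if that matrix is invertible. The final assertion follows from the standard characterization that an $[n,k]$ linear code is MDS precisely when every $k$-subset of coordinates is an information set, i.e.\ when the above criterion holds for all $\bm b$ with $|\bm b|=k$.

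The conceptual steps above are short; the part demanding care is the bookkeeping in the reduction from the $k\times k$ determinant to the $s\times s$ one. In particular, I expect the main obstacle to be verifying that clearing the non-hook columns leaves the hook block exactly equal to $\delta_{ab}+\sum_{p}\eta_{a_p}f^{\bm b}_{h'_b,a_p}$, with the indices $a,b$ matching the rows and columns of the displayed matrix, and keeping the $h'_a$-versus-$a$ indexing consistent throughout the factorization $E_{\bm b}=F\,V_{\bm b}$.
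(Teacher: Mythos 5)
Your proof is correct and follows essentially the same route as the paper's: both rest on the defining interpolation identity to factor the column submatrix indexed by $\bm b$ as (identity plus twist-coefficient matrix) times the Vandermonde matrix $V_{\bm b}$, and then reduce invertibility to the $s\times s$ hook block. The only differences are organizational — the paper clears the non-hook entries with explicit elementary matrices ($P_{1}$, then the block matrix $D$) before factoring out $V_{\bm b}$, and writes the argument only for a representative case ($\ell=2$, $\bm v=\bm 1$), whereas you factor first, do the determinant reduction afterwards, and handle general $s$ and $\bm v$ directly.
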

\begin{proof}
For simple, we just prove the case $\bm v=(1,1,\dots,1)$, $\ell=2$, $h_{1}<h_{2}$ and $t_{1}\neq t_{2}$, the general case is totally same.

Since a $[n,k]$ linear code is MDS if and only if any $k$ columns of its generator matrix are linear independent. Thus $GTRS_{k,n}[\bm\alpha,\bm t,\bm h,\bm\eta,\bm v]$ is MDS if and only if for any $\bm b=\{b_{1},\dots,b_{k}\}\subseteq\{1,\dots,n\}$ the following matrix
\tiny
\begin{equation*}
\begin{blockarray}{ccccc}
&&&&\\
\begin{block}{(cccc)c}
1&1&\dots&1&\\
\alpha_{b_{1}}&\alpha_{b_{2}}&\dots&\alpha_{b_{k}}&\\
\vdots&\vdots&&\vdots&\\
\alpha_{b_{1}}^{h_{1}-1}&\alpha_{b_{2}}^{h_{1}-1}&\dots&\alpha_{b_{k}}^{h_{1}-1}&\\
\alpha_{b_{1}}^{h_{1}}+\eta_{1}\alpha_{b_{1}}^{k-1+t_{1}}&\alpha_{b_{2}}^{h_{1}}+\eta_{1}\alpha_{b_{2}}^{k-1+t_{1}}&\dots&\alpha_{b_{k}}^{h_{1}}+\eta_{1}\alpha_{b_{k}}^{k-1+t_{1}}&\cdots h_{1}+1~row\\
\alpha_{b_{1}}^{h_{1}+1}&\alpha_{b_{2}}^{h_{1}+1}&\dots&\alpha_{b_{k}}^{h_{1}+1}&\\
\vdots&\vdots&&\vdots&\\
\alpha_{b_{1}}^{h_{2}-1}&\alpha_{b_{2}}^{h_{2}-1}&\dots&\alpha_{b_{k}}^{h_{2}-1}&\\
\alpha_{b_{1}}^{h_{2}}+\eta_{2}\alpha_{b_{1}}^{k-1+t_{2}}&\alpha_{b_{2}}^{h_{2}}+\eta_{2}\alpha_{b_{2}}^{k-1+t_{2}}&\dots&\alpha_{b_{k}}^{h_{2}}+\eta_{2}\alpha_{b_{k}}^{k-1+t_{2}}&\cdots h_{2}+1~row\\
\alpha_{b_{1}}^{h_{2}+1}&\alpha_{b_{2}}^{h_{2}+1}&\dots&\alpha_{b_{k}}^{h_{2}+1}&\\
\vdots&\vdots&&\vdots&\\
\alpha_{b_{1}}^{k-1}&\alpha_{b_{2}}^{k-1}&\dots&\alpha_{b_{k}}^{k-1}&\\
\end{block}
\end{blockarray}
\end{equation*}
\normalsize
is invertible.

That is the matrix
\tiny
\begin{equation*}
G_{\bm b}=
\begin{blockarray}{ccccc}
&&&&\\
\begin{block}{(cccc)c}
1&1&\dots&1&\\
\alpha_{b_{1}}&\alpha_{b_{2}}&\dots&\alpha_{b_{k}}&\\
\vdots&\vdots&&\vdots&\\
\alpha_{b_{1}}^{h_{1}-1}&\alpha_{b_{2}}^{h_{1}-1}&\dots&\alpha_{b_{k}}^{h_{1}-1}&\\
\alpha_{b_{1}}^{h_{1}}+\eta_{1}f_{1}^{\bm b}(\alpha_{b_{1}})&\alpha_{b_{2}}^{h_{1}}+\eta_{1}f_{1}^{\bm b}(\alpha_{b_{2}})&\dots&\alpha_{b_{k}}^{h_{1}}+\eta_{1}f_{1}^{\bm b}(\alpha_{b_{k}})&\cdots h_{1}+1~row\\
\alpha_{b_{1}}^{h_{1}+1}&\alpha_{b_{2}}^{h_{1}+1}&\dots&\alpha_{b_{k}}^{h_{1}+1}&\\
\vdots&\vdots&&\vdots&\\
\alpha_{b_{1}}^{h_{2}-1}&\alpha_{b_{2}}^{h_{2}-1}&\dots&\alpha_{b_{k}}^{h_{2}-1}&\\
\alpha_{b_{1}}^{h_{2}}+\eta_{2}f_{2}^{\bm b}(\alpha_{b_{1}})&\alpha_{b_{2}}^{h_{2}}+\eta_{2}f_{2}^{\bm b}(\alpha_{b_{2}})&\dots&\alpha_{b_{k}}^{h_{2}}+\eta_{2}f_{2}^{\bm b}(\alpha_{b_{k}})&\cdots h_{2}+1~row\\
\alpha_{b_{1}}^{h_{2}+1}&\alpha_{b_{2}}^{h_{2}+1}&\dots&\alpha_{b_{k}}^{h_{2}+1}&\\
\vdots&\vdots&&\vdots&\\
\alpha_{b_{1}}^{k-1}&\alpha_{b_{2}}^{k-1}&\dots&\alpha_{b_{k}}^{k-1}&\\
\end{block}
\end{blockarray}
\end{equation*}
\normalsize
is invertible.

By elementary row transformation, we can eliminate all the elements $\sum_{j\neq h_{1},h_{2}}f_{j,1}(\alpha_{m})$ in $(h_{1}+1)-$th row and $\sum_{j\neq h_{1},h_{2}}f_{j,2}(\alpha_{m})$ in $(h_{2}+1)-$th row of $G_{\bm b}$. Specifically, let
\tiny
\begin{equation*}
P_{1}=
\begin{blockarray}{cccccccccc}
&&&(h_{1}+1)-th~column&&&(h_{2}+1)-th~column&&\\
&&&\vdots&&&\vdots&&\\
\begin{block}{(ccccccccc)c}
1&&&&&&&&\\
&\ddots&&&&&&&\\
&&1&&&&&&\\
-\eta_{1}f^{\bm b}_{0,1}&\dots&-\eta_{1}f^{\bm b}_{h_{1}-1,1}&1&-\eta_{1}f_{h_{1}+1,1}&\dots&0&\dots&-\eta_{1}f^{\bm b}_{k-1,1}&\cdots(h_{1}+1)-th~row\\
\vdots&&\vdots&\vdots&\vdots&&\vdots&&\\
-\eta_{2}f^{\bm b}_{0,2}&\dots&-\eta_{2}f^{\bm b}_{h_{1}-1,2}&0&-\eta_{2}f^{\bm b}_{h_{1}+1,2}&\dots&1&\dots&-\eta_{2}f^{\bm b}_{k-1,2}&\cdots(h_{2}+1)-th~row\\
&&&&&&&\ddots&\\
&&&&&&&&1\\
\end{block}
\end{blockarray}
\end{equation*}
\normalsize
then
\tiny
\begin{equation*}
P_{1}G_{\bm b}=
\begin{blockarray}{ccccc}
&&&&\\
\begin{block}{(cccc)c}
1&1&\dots&1&\\
\alpha_{b_{1}}&\alpha_{b_{2}}&\dots&\alpha_{b_{k}}&\\
\vdots&\vdots&&\vdots&\\
\alpha_{b_{1}}^{h_{1}}+\eta_{1}(f_{h_{1},1}^{\bm b}\alpha_{b_{1}}^{h_{1}}+f_{h_{2},1}^{\bm b}\alpha_{b_{1}}^{h_{2}})&\alpha_{b_{2}}^{h_{1}}+\eta_{1}(f_{h_{1},1}^{\bm b}\alpha_{b_{2}}^{h_{1}}+f_{h_{2},1}^{\bm b}\alpha_{b_{2}}^{h_{2}})&\dots&\alpha_{b_{k}}^{h_{1}}+\eta_{1}(f_{h_{1},1}^{\bm b}\alpha_{b_{k}}^{h_{1}}+f_{h_{2},1}^{\bm b}\alpha_{b_{k}}^{h_{2}})&\cdots h_{1}+1~row\\
\vdots&\vdots&&\vdots&\\
\alpha_{b_{1}}^{h_{2}}+\eta_{2}(f_{h_{1},2}^{\bm b}\alpha_{b_{1}}^{h_{1}}+f_{h_{2},2}^{\bm b}\alpha_{b_{1}}^{h_{2}})&\alpha_{b_{2}}^{h_{2}}+\eta_{2}(f_{h_{1},2}^{\bm b}\alpha_{b_{2}}^{h_{1}}+f_{h_{2},2}^{\bm b}\alpha_{b_{2}}^{h_{2}})&\dots&\alpha_{b_{k}}^{h_{2}}+\eta_{2}(f_{h_{1},2}^{\bm b}\alpha_{b_{k}}^{h_{1}}+f_{h_{2},2}^{\bm b}\alpha_{b_{k}}^{h_{2}})&\cdots h_{2}+1~row\\
\vdots&\vdots&&\vdots&\\
\alpha_{b_{1}}^{k-1}&\alpha_{b_{2}}^{k-1}&\dots&\alpha_{b_{k}}^{k-1}&\\
\end{block}
\end{blockarray}
\end{equation*}
\normalsize
is invertible.

Let
\tiny
\begin{equation*}
D=
\begin{blockarray}{ccccccccccc}
&&&(h_{1}+1)-th~column&&&&(h_{2}+1)-th~column&&&\\
&&&\vdots&&&&\vdots&&&\\
\begin{block}{(cccccccccc)c}
1&&&&&&&&&&\\
&\ddots&&&&&&&&\\
&&1&&&&&&&&\\
&&&1+\eta_{1}f^{\bm b}_{h_{1},1}&&&&\eta_{1}f^{\bm b}_{h_{2},1}&&&\cdots(h_{1}+1)-th~row\\
&&&&1&&&&&&\\
&&&&&\ddots&&&&&\\
&&&&&&1&&&&\\
&&&\eta_{2}f^{\bm b}_{h_{1},2}&&&&1+\eta_{2}f^{\bm b}_{h_{2},2}&&&\cdots(h_{2}+1)-th~row\\
&&&&&&&&\ddots&&\\
&&&&&&&&&1&\\
\end{block}
\end{blockarray}
\end{equation*}
\normalsize
and
$$
V_{\bm b}=
\left(
\begin{matrix}
1&1&\dots&1\\
\alpha_{b_{1}}&\alpha_{b_{2}}&\dots&\alpha_{b_{k}}\\
\vdots&\vdots&&\vdots\\
\alpha_{b_{1}}^{k-1}&\alpha_{b_{2}}^{k-1}&\dots&\alpha_{b_{k}}^{k-1}
\end{matrix}
\right),
$$
then we have $P_{1}G_{\bm b}=DV_{\bm b}$. Since $P_{1}$ and $V_{\bm b}$ are invertible, thus $G_{\bm b}$ is invertible if and only if $D$ is invertible if and only if the following matrix
$$
\left(
\begin{matrix}
1+\eta_{1}f^{\bm b}_{h_{1},1}&\eta_{1}f^{\bm b}_{h_{2},1}\\
\eta_{2}f^{\bm b}_{h_{1},2}&1+\eta_{2}f^{\bm b}_{h_{2},2}\\
\end{matrix}
\right)
$$
is invertible. This completes the proof.
\end{proof}

In the following, suppose $s=1$, that is $h=h_{1}=h_{2}=...=h_{\ell}$. Since any $k$ columns of the standard generator matrix $G$ of a GTRS code have the same form, so we assume the first $k$ columns of the standard generator matrix is linearly independent i.e. $\{1,2,\dots,k\}$ is an information set.
For convenience, when $\bm b=\{1,2,\dots,k\}$,
we denote $(f^{\bm b}_{0,i_{j}},f^{\bm b}_{1,i_{j}},\dots,f^{\bm b}_{k-1,i_{j}})$ and $f^{\bm b}_{i_{j}}(x)=\sum_{m=0}^{k-1}f^{\bm b}_{m,i_{j}}x^{m}$ as $(f_{0,i_{j}},f_{1,i_{j}},\dots,f_{k-1,i_{j}})$ and
$f_{i_{j}}(x)=\sum_{m=0}^{k-1}f_{m,i_{j}}x^{m}$ respectively. By Lemma \ref{infor}, what we assume is $1+\sum_{i=1}^{\ell}\eta_{i}f_{h,i}\neq0$.

\begin{Proposition}\label{systemati}
Let $GTRS_{k,n}[\bm\alpha,\bm t,\bm h,\bm\eta,\bm v]$ be defined as in Definition \ref{def-2} with $s=1$ and assume the first $k$ columns of the generator matrix is linearly independent.
Then the systematic generator matrix of $GTRS_{k,n}[\bm\alpha,\bm t,\bm h,\bm\eta,\bm v]$ is $[I_{K}~|~\bm C_{\bm\alpha,\bm v}+\bm r\bm d]$ where
\begin{equation*}
\begin{split}
\bm C_{\bm\alpha,\bm v}=&\left(
\begin{matrix}
v_{1}^{-1}&&\\
&\ddots&\\
&&v_{k}^{-1}\\
\end{matrix}
\right)
\left(
\begin{matrix}
1&1&\dots&1\\
\alpha_{1}&\alpha_{2}&\dots&\alpha_{k}\\
\vdots&\vdots&\ddots&\vdots\\
\alpha_{1}^{k-1}&\alpha_{2}^{k-1}&\dots&\alpha_{k}^{k-1}
\end{matrix}
\right)^{-1}\cdot\\
&\left(
\begin{matrix}
1&1&\dots&1\\
\alpha_{k+1}&\alpha_{k+2}&\dots&\alpha_{n}\\
\vdots&\vdots&\ddots&\vdots\\
\alpha_{k+1}^{k-1}&\alpha_{k+2}^{k-1}&\dots&\alpha_{n}^{k-1}
\end{matrix}
\right)
\left(
\begin{matrix}
v_{k+1}&&\\
&\ddots&\\
&&v_{n}\\
\end{matrix}
\right)
\end{split}
\end{equation*}
is the corresponding Cauchy matrix of $GRS_{n,k}(\bm\alpha,\bm v)$, the column vector $\bm r$ is the $(h+1)-$th column of
$$
\left(
\begin{matrix}
v_{1}^{-1}&&\\
&\ddots&\\
&&v_{k}^{-1}\\
\end{matrix}
\right)
\left(
\begin{matrix}
1&1&\dots&1\\
\alpha_{1}&\alpha_{2}&\dots&\alpha_{k}\\
\vdots&\vdots&\ddots&\vdots\\
\alpha_{1}^{k-1}&\alpha_{2}^{k-1}&\dots&\alpha_{k}^{k-1}
\end{matrix}
\right)^{-1}
$$
and the row vector
\begin{equation*}
\begin{split}
&\bm d=[(1+\sum_{i=1}^{\ell}\eta_{i}f_{h,i})^{-1}(\sum_{i=1}^{\ell}\eta_{i}\alpha_{k+1}^{k-1+t_{i}}-\sum_{i=1}^{\ell}\eta_{i}f_{i}(\alpha_{k+1})),\dots,\\
&(1+\sum_{i=1}^{\ell}\eta_{i}f_{h,i})^{-1}(\sum_{i=1}^{\ell}\eta_{i}\alpha_{n}^{k-1+t_{i}}-\sum_{i=1}^{\ell}\eta_{i}f_{i}(\alpha_{n}))]
\left(
\begin{matrix}
v_{k+1}&&\\
&\ddots&\\
&&v_{n}\\
\end{matrix}
\right).
\end{split}
\end{equation*}
\normalsize
\end{Proposition}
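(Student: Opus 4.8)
The plan is to realize the standard generator matrix as a rank-one perturbation of a Vandermonde matrix and then apply the Sherman--Morrison formula. Because $s=1$, every twist index $h_{j}$ equals $h$, so the only altered row of the standard generator matrix is the $(h+1)$-th one. Writing $V_{[n]}$ for the $k\times n$ matrix with $(l+1,i)$-entry $\alpha_{i}^{l}$, collecting the twist monomials into the length-$n$ row vector $\bm w=\sum_{i=1}^{\ell}\eta_{i}\bm\alpha^{k-1+t_{i}}$, and letting $\bm\epsilon_{h+1}\in\mathbb{F}_{q}^{k}$ denote the $(h+1)$-th standard basis column vector, I would record that the standard generator matrix equals
$$
G=\bigl(V_{[n]}+\bm\epsilon_{h+1}\bm w\bigr)D_{\bm v}.
$$

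Next I would split the columns into the first $k$ and the last $n-k$. Let $V_{1},V_{2}$ be the corresponding Vandermonde blocks (columns $1,\dots,k$ and $k+1,\dots,n$), let $\bm w_{1},\bm w_{2}$ be the two pieces of $\bm w$, and let $D_{\bm v_{1}},D_{\bm v_{2}}$ be the two diagonal blocks of $D_{\bm v}$. Then the first-$k$-column block is $G_{1}=(V_{1}+\bm\epsilon_{h+1}\bm w_{1})D_{\bm v_{1}}$, and the hypothesis that $\{1,\dots,k\}$ is an information set makes $G_{1}$ invertible. The systematic generator matrix is $G_{1}^{-1}G=(I_{k}\,|\,A)$ with $A=G_{1}^{-1}G_{2}$ and $G_{2}=(V_{2}+\bm\epsilon_{h+1}\bm w_{2})D_{\bm v_{2}}$; pulling the diagonal factors out gives
$$
A=D_{\bm v_{1}}^{-1}\,(V_{1}+\bm\epsilon_{h+1}\bm w_{1})^{-1}(V_{2}+\bm\epsilon_{h+1}\bm w_{2})\,D_{\bm v_{2}}.
$$

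The computational heart is the inverse $(V_{1}+\bm\epsilon_{h+1}\bm w_{1})^{-1}$, which I would compute by Sherman--Morrison with $\bm p=V_{1}^{-1}\bm\epsilon_{h+1}$ and $\bm q=\bm w_{1}V_{1}^{-1}$. Two identifications make everything collapse. First, since the defining linear system gives $(f_{0,i},\dots,f_{k-1,i})=(\alpha_{1}^{k-1+t_{i}},\dots,\alpha_{k}^{k-1+t_{i}})V_{1}^{-1}$, we get $\bm q=\sum_{i=1}^{\ell}\eta_{i}(f_{0,i},\dots,f_{k-1,i})$, whose $(h+1)$-th entry is $\sum_{i}\eta_{i}f_{h,i}$; hence the Sherman--Morrison denominator equals $1+\sum_{i}\eta_{i}f_{h,i}$, the quantity that Lemma \ref{infor} guarantees is nonzero, so the inverse exists exactly under our hypothesis. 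Second, $\bm q V_{2}$ is the row whose $j$-th entry is $\sum_{i}\eta_{i}f_{i}(\alpha_{k+j})$, since each $(f_{0,i},\dots,f_{k-1,i})V_{2}=(f_{i}(\alpha_{k+1}),\dots,f_{i}(\alpha_{n}))$. Substituting the Sherman--Morrison expansion into the product and using $\bm q\,\bm\epsilon_{h+1}=\delta-1$ with $\delta=1+\sum_{i}\eta_{i}f_{h,i}$, the two terms proportional to $\bm p\,\bm w_{2}$ partially cancel and I expect to reach
$$
(V_{1}+\bm\epsilon_{h+1}\bm w_{1})^{-1}(V_{2}+\bm\epsilon_{h+1}\bm w_{2})=V_{1}^{-1}V_{2}+\tfrac{1}{\delta}\,\bm p\,(\bm w_{2}-\bm q V_{2}).
$$

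Finally I would reinsert the diagonal factors and read off the three pieces. The term $D_{\bm v_{1}}^{-1}V_{1}^{-1}V_{2}D_{\bm v_{2}}$ is precisely the systematic part of the plain $GRS_{n,k}(\bm\alpha,\bm v)$ in this column order, hence the Cauchy matrix $\bm C_{\bm\alpha,\bm v}$ by the Seroussi--Lempel result recalled in Section~2; the factor $D_{\bm v_{1}}^{-1}\bm p$ is the $(h+1)$-th column of $D_{\bm v_{1}}^{-1}V_{1}^{-1}$, i.e.\ the stated $\bm r$; and $\tfrac{1}{\delta}(\bm w_{2}-\bm q V_{2})D_{\bm v_{2}}$ is the stated row $\bm d$, whose $j$-th entry is $\delta^{-1}\bigl(\sum_{i}\eta_{i}\alpha_{k+j}^{k-1+t_{i}}-\sum_{i}\eta_{i}f_{i}(\alpha_{k+j})\bigr)v_{k+j}$. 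This yields $A=\bm C_{\bm\alpha,\bm v}+\bm r\bm d$. The only genuinely delicate step is the Sherman--Morrison bookkeeping: one must track the three rank-one contributions and verify the cancellation that leaves a single rank-one correction $\bm r\bm d$, rather than the several stray terms a naive expansion produces.
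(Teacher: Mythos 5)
Your proof is correct and follows essentially the same route as the paper's: both hinge on the observation that, on the information set $\{1,\dots,k\}$, the twist monomials agree with the interpolation polynomials $f_{i}$, so that $G_{k}$ is a rank-one perturbation of the Vandermonde matrix $V_{1}$ concentrated in row $h+1$, the only obstruction to inverting it is the scalar $\delta=1+\sum_{i}\eta_{i}f_{h,i}$ (nonzero by Lemma \ref{infor}), and the general-$\bm v$ case is handled by pulling out diagonal factors. The paper packages the inversion as explicit elementary row operations instead of invoking Sherman--Morrison, but these coincide: its $P_{2}P_{1}$ is exactly your correction factor $I_{k}-\delta^{-1}\bm\epsilon_{h+1}\bm q$, so the two arguments are the same computation in different notation.
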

\begin{proof}
We first prove the case $\bm v=\bm 1$. Denote by $G=(G_{k}|G_{n-k})$
where the matrix $G_{k}$ is formed by the first $k$ columns of $G$ and
$G_{n-k}$ the last $n-k$ columns of $G$.

Let $f_{i}(x)=\sum_{j=0}^{k-1}f_{j,i}x^{j}$ for all $i=1,\dots,\ell$, then
$$
G_{k}=
\left(
\begin{matrix}
1&1&\dots&1\\
\alpha_{1}&\alpha_{2}&\dots&\alpha_{k}\\
\vdots&\vdots&\ddots&\vdots\\
\alpha_{1}^{h}+\sum_{i=1}^{\ell}\eta_{i}f_{i}(\alpha_{1})&\alpha_{2}^{h}+\sum_{i=1}^{\ell}\eta_{i}f_{i}(\alpha_{2})&\dots&\alpha_{k}^{h}+\sum_{i=1}^{\ell}\eta_{i}f_{i}(\alpha_{k})\\
\vdots&\vdots&\ddots&\vdots\\
\alpha_{1}^{k-1}&\alpha_{2}^{k-1}&\dots&\alpha_{k}^{k-1}
\end{matrix}
\right).
$$

By elementary row transformation, we can eliminate all the elements of the form $\sum^{k-1}_{j\neq h,j=0}f_{j,i}(\alpha_{m})$ in $(h+1)-$th row of $G_{k}$. Specifically, let
\tiny
\begin{equation*}
P_{1}=
\begin{blockarray}{cccccccc}
&&&the~(h+1)-th~columon&&&\\
&&&\vdots&&&\\
\begin{block}{(ccccccc)c}
1&&&&&&\\
&\ddots&&&&&\\
&&1&&&&\\
-\sum_{i=1}^{\ell}\eta_{i}f_{0,i}&\dots&-\sum_{i=1}^{\ell}\eta_{i}f_{h-1,i}&1&-\sum_{i=1}^{\ell}\eta_{i}f_{h+1,i}&
\dots&-\sum_{i=1}^{\ell}\eta_{i}f_{k-1,i}&\cdots the ~(h+1)-th~row\\
&&&&1&&\\
&&&&&\ddots&\\
&&&&&&1\\
\end{block}
\end{blockarray}
\end{equation*}
\normalsize

and then
$$
P_{1}G_{k}=
\left(
\begin{matrix}
1&1&\dots&1\\
\alpha_{1}&\alpha_{2}&\dots&\alpha_{k}\\
\vdots&\vdots&\ddots&\vdots\\
\alpha_{1}^{h}+\sum_{i=1}^{\ell}\eta_{i}f_{h,i}\alpha_{1}^{h}&\alpha_{2}^{h}+\sum_{i=1}^{\ell}\eta_{i}f_{h,i}\alpha_{2}^{h}&\dots&\alpha_{k}^{h}+\sum_{i=1}^{\ell}\eta_{i}f_{h,i}\alpha_{k}^{h}\\
\vdots&\vdots&&\vdots\\
\alpha_{1}^{k-1}&\alpha_{2}^{k-1}&\dots&\alpha_{k}^{k-1}
\end{matrix}
\right).
$$
Let
\begin{equation*}
P_{2}=
\begin{blockarray}{cccccc}
&&&&\\
\begin{block}{(ccccc)c}
1&&&&\\
&\ddots&&&\\
&&(1+\sum_{i=1}^{\ell}\eta_{i}f_{h,i})^{-1}&&&\cdots the ~(h+1)-th~row\\
&&&\ddots&&\\
&&&&1&\\
\end{block}
\end{blockarray}
\end{equation*}
then
$$
P_{2}P_{1}G_{k}=
\left(
\begin{matrix}
1&1&\dots&1\\
\alpha_{1}&\alpha_{2}&\dots&\alpha_{k}\\
\vdots&\vdots&\ddots&\vdots\\
\alpha_{1}^{h}&\alpha_{2}^{h}&\dots&\alpha_{k}^{h}\\
\vdots&\vdots&\ddots&\vdots\\
\alpha_{1}^{k-1}&\alpha_{2}^{k-1}&\dots&\alpha_{k}^{k-1}
\end{matrix}
\right).
$$
Let
$$
V=
\left(
\begin{matrix}
1&1&\dots&1\\
\alpha_{1}&\alpha_{2}&\dots&\alpha_{k}\\
\vdots&\vdots&\ddots&\vdots\\
\alpha_{1}^{k-1}&\alpha_{2}^{k-1}&\dots&\alpha_{k}^{k-1}
\end{matrix}
\right),
$$
then we have
$$V^{-1}P_{2}P_{1}G=V^{-1}P_{2}P_{1}(G_{k}|G_{n-k})=(I_{k}|V^{-1}P_{2}P_{1}G_{n-k}).$$
By direct calculation, we have
$P_{2}P_{1}G_{n-k}=(\bm a_{1},\dots,\bm a_{n-k})$ where
$$
\bm a_{m}=\left(
\begin{matrix}
1\\
\alpha_{k+m}\\
\vdots\\
(1+\sum_{i=1}^{\ell}\eta_{i}f_{h,i})^{-1}(\alpha_{k+m}^{h}+\sum_{i=1}^{\ell}\eta_{i}\alpha_{k+m}^{k-1+t_{i}}-\sum_{i=1}^{\ell}\eta_{i}\sum_{j\neq h}f_{j,i}\alpha_{k+m}^{j})\\
\vdots\\
\alpha_{k+m}^{k-1}\\
\end{matrix}
\right)
$$

$$
=\left(
\begin{matrix}
1\\
\alpha_{k+m}\\
\vdots\\
\alpha_{k+m}^{h}\\
\vdots\\
\alpha_{k+m}^{k-1}\\
\end{matrix}
\right)+
\left(
\begin{matrix}
0\\
0\\
\vdots\\
(1+\sum_{i=1}^{\ell}\eta_{i}f_{h,i})^{-1}(\sum_{i=1}^{\ell}\eta_{i}\alpha_{k+m}^{k-1+t_{i}}-\sum_{i=1}^{\ell}\eta_{i}f_{i}(\alpha_{k+m}))\\
\vdots\\
0\\
\end{matrix}
\right)
$$
for $m=1,\dots,n-k$.
Thus, $V^{-1}P_{2}P_{1}G_{n-k}=C_{\bm\alpha,\bm 1}+\bm r\bm d$ where

$$
C_{\bm\alpha,\bm 1}=
\left(
\begin{matrix}
1&1&\dots&1\\
\alpha_{1}&\alpha_{2}&\dots&\alpha_{k}\\
\vdots&\vdots&\ddots&\vdots\\
\alpha_{1}^{k-1}&\alpha_{2}^{k-1}&\dots&\alpha_{k}^{k-1}
\end{matrix}
\right)^{-1}
\left(
\begin{matrix}
1&1&\dots&1\\
\alpha_{k+1}&\alpha_{k+2}&\dots&\alpha_{n}\\
\vdots&\vdots&\ddots&\vdots\\
\alpha_{k+1}^{k-1}&\alpha_{k+2}^{k-1}&\dots&\alpha_{n}^{k-1}
\end{matrix}
\right)
$$
is the corresponding Cauchy matrix of $GRS_{n,k}(\bm\alpha,\bm 1)$, the column vector $\bm r$ is the $(h+1)-$th column of
$$
\left(
\begin{matrix}
1&1&\dots&1\\
\alpha_{1}&\alpha_{2}&\dots&\alpha_{k}\\
\vdots&\vdots&\ddots&\vdots\\
\alpha_{1}^{k-1}&\alpha_{2}^{k-1}&\dots&\alpha_{k}^{k-1}
\end{matrix}
\right)^{-1}
$$
and the row vector
\begin{equation*}
\begin{split}
\bm d&=
[(1+\sum_{i=1}^{\ell}\eta_{i}f_{h,i})^{-1}(\sum_{i=1}^{\ell}\eta_{i}\alpha_{k+1}^{k-1+t_{i}}-\sum_{i=1}^{\ell}\eta_{i}f_{i}(\alpha_{k+1})),\dots,\\
&(1+\sum_{i=1}^{\ell}\eta_{i}f_{h,i})^{-1}(\sum_{i=1}^{\ell}\eta_{i}\alpha_{n}^{k-1+t_{i}}
-\sum_{i=1}^{\ell}\eta_{i}f_{i}(\alpha_{n}))].
\end{split}
\end{equation*}
\normalsize

For $\bm v=(v_{1},\dots,v_{n})\neq\bm 1$, let $\mathbf{diag}\{v_{1},\dots,v_{n}\}$ denote  the diagonal matrix with the diagonal elements being $v_{1},\dots,v_{n}$. It follows that $$
G\mathbf{diag}\{v_{1},\dots,v_{n}\}=(G_{k}|G_{n-k})\mathbf{diag}\{v_{1},\dots,v_{n}\}
$$ generates $GTRS_{k,n}[\bm\alpha,\bm t,\bm h,\bm\eta,\bm v]$. Multiply it from the left hand side by $\mathbf{diag}\{v^{-1}_{1},\dots,v^{-1}_{k}\}G^{-1}_{k}$, we get

$\mathbf{diag}\{v^{-1}_{1},\dots,v^{-1}_{k}\}G^{-1}_{k}(G_{k}|G_{n-k})\mathbf{diag}\{v_{1},\dots,v_{n}\}$
\begin{eqnarray*}
&=&\mathbf{diag}\{v^{-1}_{1},\dots,v^{-1}_{k}\}(I_{k}|G^{-1}_{k}G_{n-k})\mathbf{diag}\{v_{1},\dots,v_{n}\}\\
&=&\mathbf{diag}\{v^{-1}_{1},\dots,v^{-1}_{k}\}(\mathbf{diag}\{v_{1},\dots,v_{k}\}|G^{-1}_{k}G_{n-k}\mathbf{diag}\{v_{k+1},\dots,v_{n}\})\\
&=&(I_{k}|\mathbf{diag}\{v^{-1}_{1},\dots,v^{-1}_{k}\}G^{-1}_{k}G_{n-k}\mathbf{diag}\{v_{k+1},\dots,v_{n}\})
\end{eqnarray*}
which completes the proof.
%
\end{proof}

\begin{Example}{\rm
Let $\ell=1$, $t_{1}=1$, $h_{1}=0$ and $\bm v=\bm 1$. Let
$q$ be an odd prime power,  $H=\{\alpha_{1},\alpha_{2},\dots,\alpha_{2n}\}$ be a subgroup of $\mathbb{F}_{q}^{*}$ and $\{\alpha_{1},\alpha_{2},\dots,\alpha_{n}\}$ be the subgroup of $H$ with order $n$.
Let $(-1)^{n}\eta_{1}^{-1}\in\mathbb{F}^{*}_{q}\setminus H$ and consider the $[2n,n]$ GTRS code $GTRS_{n,2n}[\bm\alpha,1,0,\eta_{1},\bm 1]$.

By the above conditions, we know $(f_{0,1},f_{1,1},\dots,f_{n-1,1})=(1,0,\dots,0)$ and
$$
\left(
\begin{matrix}
1&1&\dots&1\\
\alpha_{1}&\alpha_{2}&\dots&\alpha_{n}\\
\vdots&\vdots&\ddots&\vdots\\
\alpha_{1}^{n-1}&\alpha_{2}^{n-1}&\dots&\alpha_{n}^{n-1}
\end{matrix}
\right)^{-1}=\frac{1}{n}
\left(
\begin{matrix}
1&\alpha^{-1}_{1}&\dots&\alpha^{-(n-1)}_{1}\\
1&\alpha^{-1}_{2}&\dots&\alpha^{-(n-1)}_{2}\\
\vdots&\vdots&\ddots&\vdots\\
1&\alpha_{n}^{-1}&\dots&\alpha_{n}^{-(n-1)}
\end{matrix}
\right).
$$
Then we have $\bm r=\left(
\begin{aligned}
\frac{1}{n}\\
\vdots~\\
\frac{1}{n}\\
\end{aligned}
\right)
$
and $\bm d=\frac{\eta_{1}}{1+\eta_{1}}(\alpha_{n+1}^{n}-1,\dots,\alpha_{2n}^{n}-1)$.
By Proposition \ref{systemati}, we know the systematic generator matrix of $GTRS_{n,2n}[\bm\alpha,1,0,\eta_{1},\bm 1]$ is $(I_{n}|A_{n\times n})$ where $\bm C_{\bm\alpha,\bm 1}$ is a Cauchy matrix and
$$
A_{n\times n}=\bm C_{\bm\alpha,\bm 1}+\frac{\eta_{1}}{n(1+\eta_{1})}\left(
\begin{matrix}
\alpha_{n+1}^{n}-1&\alpha_{n+2}^{n}-1&\dots&\alpha_{2n}^{n}-1\\
\alpha_{n+1}^{n}-1&\alpha_{n+2}^{n}-1&\dots&\alpha_{2n}^{n}-1\\
\vdots&\vdots&\ddots&\vdots\\
\alpha_{n+1}^{n}-1&\alpha_{n+2}^{n}-1&\dots&\alpha_{2n}^{n}-1\\
\end{matrix}
\right).
$$
}
\end{Example}

%
%
%
%
%

\section{A construction of non-GRS MDS codes}
It is known that $[n,1,n]$ and $[n,n-1,2]$ MDS codes are trivial;
 $[n,2,n-1]$ and $[n,n-2,3]$ MDS codes are GRS codes.
 Thus, we assume that $3\leq k\leq n-3$ in this section.

 First choose some elements from $\mathbb{F}_{q}$; Let $d_{1},\dots,d_{n-k},c_{1},\dots,c_{k-2}$ be nonzero,
 $x_{1},\dots,x_{k-2}$ be nonzero and distinct and $y_{1},\dots,y_{n-k}$ be nonzero and distinct with $x_{i}+y_{j}\neq0$ for all $i,j$. Suppose the first row of $A=(a_{ij})$ is $a_{1j}=d_{j}$, the second row of $A=(a_{ij})$ is $a_{2j}=d_{j}y_{j}^{-1}$ for $1\leq j\leq n-k$, and the $i$th row of $A=(a_{ij})$ is $a_{ij}=\frac{c_{i-2}d_{j}}{x_{i-2}+y_{j}}$ for all $3\leq i\leq k$ and $1\leq j\leq n-k$.

 Let $\mathbb{F}_{q^{d}}\supseteq\mathbb{F}_{q}$ be an extension field where $d>1$. Let $E_{ij}$ denote the matrix whose $(i,j)$th entry is $1$ and others are zeros. Suppose $\beta\in\mathbb{F}_{q^{d}}\setminus\mathbb{F}_{q}$, build a new $k\times(n-k)$ matrix $D=A+\beta E_{11}$ over $\mathbb{F}_{q^{d}}$, i.e.

$$
D=
\left(
\begin{matrix}
d_{1}+\beta&d_{2}&\dots&d_{n-k}\\
d_{1}y_{1}^{-1}&d_{2}y_{2}^{-1}&\dots &d_{n-k}y_{n-k}^{-1}\\
\frac{c_{1}d_{1}}{x_{1}+y_{1}}&\frac{c_{1}d_{2}}{x_{1}+y_{2}}&\dots&\frac{c_{1}d_{n-k}}{x_{1}+y_{n-k}}\\
\vdots&\vdots&\ddots&\vdots\\
\frac{c_{k-2}d_{1}}{x_{k-2}+y_{1}}&\frac{c_{k-2}d_{2}}{x_{k-2}+y_{2}}&\dots&\frac{c_{k-2}d_{n-k}}{x_{k-2}+y_{n-k}}\\
\end{matrix}
\right)_{k\times(n-k)}.
$$

\begin{Theorem}\label{NONGRSMDS}
With the notations above, let $n,k$ be integers with $3\leq k\leq n-3$. Suppose the $[n,k]$ linear code $\mathcal{D}$ over $\mathbb{F}_{q^{d}}$ is generated by $(I_{k}|D)$, then $\mathcal{D}$ is an $[n,k]$ non-GRS MDS code.
\end{Theorem}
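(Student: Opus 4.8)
The plan is to establish the two assertions separately: first that $\mathcal{D}$ is MDS, and then that it is non-GRS. Throughout I would use that $A$ is exactly the Cauchy matrix of Lemma~\ref{GRScond}, so $(I_{k}\,|\,A)$ generates a genuine GRS code over $\mathbb{F}_{q}$, which is MDS; in particular every square submatrix of $A$ is invertible by Proposition~\ref{mdspro}. Since $D=A+\beta E_{11}$ differs from $A$ only in its top-left entry, I would prove the MDS property by checking the submatrix criterion of Proposition~\ref{mdspro} directly. For a square submatrix $D[R,S]$ that avoids the position $(1,1)$ (i.e. $1\notin R$ or $1\notin S$) one has $D[R,S]=A[R,S]$, which is invertible. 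For a submatrix with $1\in R$ and $1\in S$, expanding along the perturbed entry (multilinearity of the determinant in that single entry) gives $\det D[R,S]=\det A[R,S]+\beta\,\det A[R\setminus\{1\},S\setminus\{1\}]$. Both determinants lie in $\mathbb{F}_{q}$, and the cofactor $\det A[R\setminus\{1\},S\setminus\{1\}]$ is a nonzero element of $\mathbb{F}_{q}$ (it is a smaller square submatrix of $A$, or the empty product $1$ when the original size is $1$); since $\beta\notin\mathbb{F}_{q}$, the expression $\det A[R,S]+\beta\,\det A[R\setminus\{1\},S\setminus\{1\}]$ cannot vanish. Hence every square submatrix of $D$ is invertible and $\mathcal{D}$ is MDS.

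For the non-GRS part I would argue by contradiction using the Cauchy-matrix characterization. Since $(I_{k}\,|\,D)$ is the unique systematic generator matrix of $\mathcal{D}$, if $\mathcal{D}$ were GRS then by Lemma~\ref{GRScond} the matrix $D$ would admit Cauchy parameters over $\mathbb{F}_{q^{d}}$: nonzero $\tilde d_{j},\tilde c_{i}$, distinct nonzero $\tilde x_{i}$, and distinct nonzero $\tilde y_{j}$ with $D_{1j}=\tilde d_{j}$, $D_{2j}=\tilde d_{j}\tilde y_{j}^{-1}$, and $D_{ij}=\tilde c_{i-2}\tilde d_{j}/(\tilde x_{i-2}+\tilde y_{j})$ for $i\ge 3$. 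The key observation is that $D$ agrees with $A$ everywhere except at $(1,1)$, so all entries of $D$ lie in $\mathbb{F}_{q}$ except $D_{11}=d_{1}+\beta\notin\mathbb{F}_{q}$. I would then propagate membership in $\mathbb{F}_{q}$ from the entries to the parameters: from the first two rows, $\tilde d_{j}=D_{1j}\in\mathbb{F}_{q}$ and $\tilde y_{j}=\tilde d_{j}/D_{2j}\in\mathbb{F}_{q}$ for every column $j\ge 2$; comparing two such columns $j,j'\ge 2$ in a row $i\ge 3$ gives $(\tilde x_{i-2}+\tilde y_{j'})/(\tilde x_{i-2}+\tilde y_{j})\in\mathbb{F}_{q}$, and the distinctness of $\tilde y_{j},\tilde y_{j'}$ forces $\tilde x_{i-2}\in\mathbb{F}_{q}$ (this is where $n-k\ge 3$ is needed, to supply two columns $j,j'\ge 2$), whence also $\tilde c_{i-2}\in\mathbb{F}_{q}$.

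Finally I would pin down the first column. Using the unperturbed entries $D_{21}=\tilde d_{1}\tilde y_{1}^{-1}$ and $D_{31}=\tilde c_{1}\tilde d_{1}/(\tilde x_{1}+\tilde y_{1})$, both in $\mathbb{F}_{q}$, together with $\tilde c_{1},\tilde x_{1}\in\mathbb{F}_{q}$, I obtain two $\mathbb{F}_{q}$-linear relations between $\tilde d_{1}$ and $\tilde y_{1}$; writing $p=D_{21}$ and $r=D_{31}/\tilde c_{1}$ and eliminating $\tilde d_{1}$ yields $(p-r)\tilde y_{1}=r\tilde x_{1}$ with $r\tilde x_{1}\neq 0$ (using that the Cauchy parameter $\tilde x_{1}$ is nonzero and $D_{31}\neq 0$). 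This forces $p\neq r$ and $\tilde y_{1}\in\mathbb{F}_{q}$, and then $\tilde d_{1}=p\tilde y_{1}\in\mathbb{F}_{q}$, contradicting $\tilde d_{1}=D_{11}=d_{1}+\beta\notin\mathbb{F}_{q}$; hence $\mathcal{D}$ is non-GRS. The delicate point of the whole argument is exactly this rigidity step: one must exploit the distinctness and nonvanishing of the Cauchy parameters (not merely the $\mathbb{F}_{q}$-membership of the entries) to force every parameter into $\mathbb{F}_{q}$, and one must check that the hypotheses $3\le k$ (to have a row $i\ge 3$) and $k\le n-3$ (to have two columns $j\ge 2$) supply enough entries to run it. The MDS part, by contrast, is a routine determinant expansion combined with $\beta\notin\mathbb{F}_{q}$.
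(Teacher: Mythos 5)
Your proof is correct. The MDS half is exactly the paper's argument: both expand a square submatrix through the perturbed entry as $\det D[R,S]=\det A[R,S]+\beta\,\det A[R\setminus\{1\},S\setminus\{1\}]$, note that both minors lie in $\mathbb{F}_{q}$ with the cofactor nonzero, and invoke $\beta\notin\mathbb{F}_{q}$. The non-GRS half uses the same contradiction framework (Lemma~\ref{GRScond} applied to the unique systematic generator matrix $(I_{k}|D)$) but a genuinely different mechanism. The paper identifies the hypothetical Cauchy parameters of $D$ with the known parameters of $A$: from columns $2,3$ it gets $d_{2}'=d_{2}$, $y_{2}'=y_{2}$, $d_{3}'=d_{3}$, $y_{3}'=y_{3}$, then solves a $2\times2$ linear system with determinant $-c_{1}(y_{2}-y_{3})\neq0$ to conclude $c_{1}'=c_{1}$ and $x_{1}'=x_{1}$, and finally the three entries of column $1$ force $\beta=0$. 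You never compare with $A$'s parameters at all; you track the weaker invariant of membership in $\mathbb{F}_{q}$, propagating rationality from the entries to $\tilde d_{j},\tilde y_{j}$ ($j\geq2$), then to $\tilde x_{1},\tilde c_{1}$ (using distinctness of the $\tilde y_{j}$), then to $\tilde y_{1},\tilde d_{1}$ (using $\tilde x_{1}\neq0$), contradicting $D_{11}=d_{1}+\beta\notin\mathbb{F}_{q}$. What your route buys is generality: it proves a rigidity statement --- any matrix in the Cauchy form of Lemma~\ref{GRScond} over $\mathbb{F}_{q^{d}}$ whose entries outside position $(1,1)$ all lie in $\mathbb{F}_{q}$ must have its $(1,1)$ entry in $\mathbb{F}_{q}$ as well --- so it applies verbatim to perturbing \emph{any} Cauchy matrix over $\mathbb{F}_{q}$ without ever using the specific values $d_{j},c_{i},x_{i},y_{j}$. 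The paper's route is more concrete, ending in the explicit identity $\beta x_{1}=0$. Both consume the hypotheses identically: $k\geq3$ supplies the third row, and $n-k\geq3$ supplies the two unperturbed columns.
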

\begin{proof}
We prove that $\mathcal{D}$ is MDS via Proposition \ref{mdspro}. Since any square submatrix which does not contain the element $d_{1}+\beta$ is invertible (from Equations \ref{eq2.1} and \ref{eq2.2}, also from Proposition \ref{mdspro}). Thus we just consider the square submatrix which contains $d_{1}+\beta$.

Let $D\left(
\begin{aligned}
i_{1}&i_{2}&\dots&i_{m}\\
j_{1}&j_{2}&\dots&j_{m}\\
\end{aligned}
\right)
$
denotes the square submatrix of $D$ formed by $i_{1}<i_{2}<\dots<i_{m}$ rows and $j_{1}<j_{2}<\dots<j_{m}$ columns of $D$ where $1\leq m\leq\min\{k,n-k\}$ (we also use the above symbol for the matrix $A$). We just need to consider the case $i_{1}=j_{1}=1$.

Obviously,
$$D\left(
\begin{aligned}
i_{1}&\dots&i_{m}\\
j_{1}&\dots&j_{m}\\
\end{aligned}
\right)=A\left(
\begin{aligned}
i_{1}&\dots&i_{m}\\
j_{1}&\dots&j_{m}\\
\end{aligned}
\right)+\beta A\left(
\begin{aligned}
i_{2}&\dots&i_{m}\\
j_{2}&\dots&j_{m}\\
\end{aligned}
\right),$$
and the two minors from $A$ are nonzero over $\mathbb{F}_{q}$. Since $\beta\notin\mathbb{F}_{q}$, then $D\left(
\begin{aligned}
i_{1}&\dots&i_{m}\\
j_{1}&\dots&j_{m}\\
\end{aligned}
\right)\neq0$.
By Proposition \ref{mdspro}, $\mathcal{D}$ is MDS.

Next, we prove $\mathcal{D}$ is non-GRS. By Lemma \ref{GRScond}, if $\mathcal{D}$ is GRS, then there exist nonzero elements $d^{'}_{1},\dots,d^{'}_{n-k},c^{'}_{1},\dots,c^{'}_{k-2}$, distinct nonzeros $x^{'}_{1},\dots,x^{'}_{k-2}$ and distinct nonzeros $y^{'}_{1},\dots,y^{'}_{n-k}$ with $x^{'}_{i}+y^{'}_{j}\neq0$ for all $i,j$ such that the first row of $D$ is $(d^{'}_{1}d^{'}_{2}\dots d^{'}_{n-k})$, the second row of $D$ is $(\frac{d^{'}_{1}}{y^{'}_{1}}\frac{d^{'}_{2}}{y^{'}_{2}}\dots \frac{d^{'}_{n-k}}{y^{'}_{n-k}})$ and the $i-$th row of $D$ is $(\frac{c^{'}_{i-2}d^{'}_{1}}{x^{'}_{i-2}+y^{'}_{1}}\frac{c^{'}_{i-2}d^{'}_{2}}{x^{'}_{i-2}+y^{'}_{2}}\dots\frac{c^{'}_{i-2}d^{'}_{n-k}}{x^{'}_{i-2}+y^{'}_{n-k}})$ for all $i=3,\dots,k$.

From $d_{2}=d^{'}_{2}$ and $\frac{d_{2}}{y_{2}}=\frac{d^{'}_{2}}{y^{'}_{2}}$, we have $y_{2}=y^{'}_{2}$. Similarly, we also have $d_{3}=d^{'}_{3}$ and $y_{3}=y^{'}_{3}$. Then by $\frac{c_{1}d_{2}}{x_{1}+y_{2}}=\frac{c^{'}_{1}d^{'}_{2}}{x^{'}_{1}+y^{'}_{2}}$, $d_{2}=d^{'}_{2}$ and $y_{2}=y^{'}_{2}$, we have $\frac{c_{1}}{x_{1}+y_{2}}=\frac{c^{'}_{1}}{x^{'}_{1}+y_{2}}$. Similarly, we have $\frac{c_{1}}{x_{1}+y_{3}}=\frac{c^{'}_{1}}{x^{'}_{1}+y_{3}}$. Then we can get linear equations about $c^{'}_{1}$ and $x^{'}_{1}$ as

$$
\left\{
\begin{aligned}
(x_{1}+y_{2})c^{'}_{1}-c_{1}x^{'}_{1}=c_{1}y_{2}\\
(x_{1}+y_{3})c^{'}_{1}-c_{1}x^{'}_{1}=c_{1}y_{3}\\
\end{aligned}
\right..
$$

Since
$$
\det\left(
\begin{matrix}
x_{1}+y_{2}&-c_{1}\\
x_{1}+y_{3}&-c_{1}\\
\end{matrix}
\right)=
-c_{1}(y_{2}-y_{3})\neq0
$$
we know this linear equations has a unique solution.
Moreover,
$\left\{
\begin{aligned}
c^{'}_{1}=c_{1}\\
x^{'}_{1}=x_{1}\\
\end{aligned}
\right.
$ is clearly a solution of this linear equations, thus $c^{'}_{1}=c_{1}$ and $x^{'}_{1}=x_{1}$.

From $d_{1}+\beta=d^{'}_{1}$ and $\frac{d_{1}}{y_{1}}=\frac{d^{'}_{1}}{y^{'}_{1}}$, we have $y^{'}_{1}=\frac{y_{1}(d_{1}+\beta)}{d_{1}}$. Then by $\frac{c^{'}_{1}d^{'}_{1}}{x^{'}_{1}+y^{'}_{1}}=\frac{c_{1}d_{1}}{x_{1}+y_{1}}$ and
$\left\{
\begin{aligned}
c^{'}_{1}=c_{1}\\
x^{'}_{1}=x_{1}\\
\end{aligned}
\right.
$
we can get $\beta=0$ a contradiction, showing $\mathcal{D}$ is non-GRS.
\end{proof}

\begin{Example}{\rm
Let $q=7^{2}$, $\mathbb{F}_{7^{2}}=\mathbb{F}_{7}(\theta)$ with $\theta^{2}+2=0$. Let $\bm\alpha=(\infty,0,1,\dots,6)$ and $\bm v=\bm 1$, then $GRS_{8,3}(\bm\alpha,\bm 1)$ has the systematic generator matrix $(I_{3}|A_{3\times 5})$ where
$$
A_{3\times 5}=
\left(
\begin{matrix}
2&6&5&6&2\\
6&5&4&4&2\\
2&3&4&5&6\\
\end{matrix}
\right).
$$
Let
$$
D_{3\times 5}=
\left(
\begin{matrix}
2+\theta&6&5&6&2\\
6&5&4&4&2\\
2&3&4&5&6\\
\end{matrix}
\right)
$$
then $(I_{3}|D_{3\times 5})$ generates a $[8,3]$ non-GRS MDS code over $\mathbb{F}_{7^{2}}$.}
\end{Example}

\section{Conclusion}
This paper studies the non-GRS property of GTRS codes, gives the systematic generator matrices for a class of GTRS codes and a construction of non-GRS MDS codes. This construction is gotten by ''breaking" a Cauchy matrix. Different from the Schur product method, we prove the non-GRS property of our MDS codes by Cauchy matrix method.

\vskip 4mm

\noindent {\bf Acknowledgment.} This work was done while the first author was visiting the Division of Mathematical Sciences, School of Physical and Mathematical Sciences, Nanyang Technological University, Singapore. The first author is grateful for the hospitality and support, and sincerely thanks Professor Fr\'{e}d\'{e}rique Oggier for her hosting. The authors thank Fr\'{e}d\'{e}rique Oggier for her very helpful comments which have improved the presentation of this paper.  This work was supported by NSFC (Grant Nos. 12441102, 12271199) and China Scholarship Council (No. 202306770055).


\end{document}